% This is file FAC2egui.tex
% v2.00, 13th July 2000
% (based on FACguide.tex v1.12)

% Copyright (C) 1999,2000 Cambridge University Press

\NeedsTeXFormat{LaTeX2e}

% The following saves the original definitions of \geq and \leq (guide only).

\documentclass{fac}

\usepackage{graphicx}
\usepackage{amssymb}
\usepackage{amsfonts}
\usepackage{amsmath}
\usepackage{dsfont}
\usepackage{MnSymbol}
\usepackage{mathtools}
\usepackage{epsfig}
\usepackage{epstopdf}
\usepackage{tikz}
\usepackage{amsthm}
\ifprodtf \else \usepackage{latexsym}\fi

% The following macros automatically define symbols to be used in Table 1 of
% the authors' guide, using characters from the AMS symbol font MSAM.

\newcommand\black{\ensuremath{\blacktriangleright}}
\newcommand\white{\ensuremath{\vartriangleright}}

\newif\ifamsfontsloaded
\ifprodtf
  \newcommand\whbl{\white\kern-.1em--\kern-.1em\black}
  \newcommand\blwh{\black\kern-.1em--\kern-.1em\white}
  \newcommand\blbl{\black\kern-.1em--\kern-.1em\black}
  \newcommand\whwh{\white\kern-.1em--\kern-.1em\white}
  \amsfontsloadedtrue
\else
  \checkfont{msam10}
  \iffontfound
    \IfFileExists{amssymb.sty}
      {\usepackage{amssymb}\amsfontsloadedtrue
       \newcommand\whbl{\white\kern-.125em--\kern-.125em\black}%
       \newcommand\blwh{\black\kern-.125em--\kern-.125em\white}%
       \newcommand\blbl{\black\kern-.125em--\kern-.125em\black}%
       \newcommand\whwh{\white\kern-.125em--\kern-.125em\white}}
      {}
  \fi
\fi

%% Macros for the guide only %%

%% End of macros for the guide %%

\newtheorem{theorem}{Theorem}[section]
\newtheorem{definition}[theorem]{Definition}

\title[Draft of Operational Semantics of Games]
      {Operational Semantics of Games}

\author[Yong Wang]
    {Yong Wang\\
     College of Computer Science and Technology,\\
     Faculty of Information Technology,\\
     Beijing University of Technology, Beijing, China\\
     }

\correspond{Yong Wang, Pingleyuan 100, Chaoyang District, Beijing, China.
            e-mail: wangy@bjut.edu.cn}

\pubyear{2019}
\pagerange{\pageref{firstpage}--\pageref{lastpage}}

\begin{document}
\label{firstpage}

\makecorrespond

\maketitle

\begin{abstract}
We introduce operational semantics into games. And based on the operational semantics, we establish a full algebra of games, including basic algebra of games, algebra of concurrent games, recursion and abstraction. The algebra can be used widely to reason on the behaviors of systems (not only computational systems) with game theory supported.
\end{abstract}

\begin{keywords}
Games; Two-person Games; Game Equivalence; Operational Semantics; Formal Theory.
\end{keywords}

\section{Introduction}{\label{int}}

Game theory has been widely used to interpret the nature of the world. The combination of game theory and (computational) logic \cite{LIG} always exists two ways.

One is to use game theory to interpret computational logic, such as the well-known game semantics \cite{PCF} \cite{PCF2} \cite{PCF3} \cite{MIL} \cite{Algol}, in which game theory acts as a foundational semantics bases to understand the behaviors of computer programming language.

The other is to give game theory a logic basis, such as game logic \cite{GL1} \cite{GL2} \cite{GL3}, game algebras \cite{BAG1} \cite{BAG2}, algebras \cite{CG4} for concurrent games
\cite{CG1} \cite{CG2} \cite{CG3}.

In this paper, we introduce operational semantics into games, and based on the operational semantics, we establish a fully algebraic axiomatization of games, including the basic algebra of games, algebra of concurrent games, recursion and abstraction. This paper is organized as follows. In Section \ref{osg}, we introduce operational semantics into games. We introduce the basic algebra of games, algebra of concurrent games, recursion and abstraction in Section \ref{bag}, \ref{acg}, \ref{rec} and \ref{abs}, respectively. Finally, in Section \ref{con}, we conclude this paper.

\section{Operational Semantics of Games}\label{osg}

In this section, we introduce the related equational logic, and structured operational semantics of games, which serve as the bases of game algebras. The concrete equational logics and operational semantics of games are included in the follow algebras of games.

\subsection{Proof Techniques}\label{PT}

\begin{definition}[Game language]\label{gl}
The game language $GL$ consists of:
\begin{enumerate}
  \item a set of atomic games $\mathcal{G}_{at}=\{g_a\}_{a\in A}$, and a special \emph{idle} atomic game $\iota=g_0\in\mathcal{G}_{at}$;
  \item game operations, including choice of first player $\vee$, choice of second player $\wedge$, dualization $^d$, composition of games $\circ$, and parallel of games $\parallel$.
\end{enumerate}

Atomic games and their duals are called literals. And models of $GL$ are called game boards.
\end{definition}

\begin{definition}[Game terms]\label{gt}
The game terms are defined inductively as follows:
\begin{itemize}
  \item every atomic game $g_a$ is a game term;
  \item if $G,H$ are game terms, then $G^d, H^d$, $G\vee H$, $G\wedge H$, $G\circ H$ and $G\parallel H$ are all game terms.
\end{itemize}
\end{definition}

\begin{definition}[Elimination property]
Let a game algebra with a defined set of basic terms as a subset of the set of closed terms over the game algebra. Then the game algebra has the elimination to basic terms property if for every closed term $G$ of the algebra, there exists a basic term $H$ of the algebra such that the algebra$\vdash G=H$.
\end{definition}

\begin{definition}[Strongly normalizing]
A term $G_0$ is called strongly normalizing if does not an infinite series of reductions beginning in $G_0$.
\end{definition}

\begin{definition}
We write $G>_{lpo} H$ if $G\rightarrow^+ H$ where $\rightarrow^+$ is the transitive closure of the reduction relation defined by the transition rules of a game algebra.
\end{definition}

\begin{theorem}[Strong normalization]\label{SN}
Let a term rewriting (TRS) system with finitely many rewriting rules and let $>$ be a well-founded ordering on the language of the corresponding algebra. If $G>_{lpo} H$ for each rewriting rule $G\rightarrow H$ in the TRS, then the term rewriting system is strongly normalizing.
\end{theorem}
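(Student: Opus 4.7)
The plan is to argue by contradiction: suppose some term admits an infinite reduction sequence $G_0 \to G_1 \to G_2 \to \cdots$, and derive an infinite descending chain in the well-founded ordering $>$, contradicting well-foundedness. The bridge from rules to reductions is a compatibility lemma stating that whenever $G \to H$ holds via some rewrite rule $L \to R$ applied at a position $p$ under a substitution $\sigma$, one has $G >_{lpo} H$. Granted the hypothesis that $L >_{lpo} R$ for every rule, this lemma reduces to two standard closure properties of $>_{lpo}$: closure under substitutions, yielding $L\sigma >_{lpo} R\sigma$, and closure under contexts, yielding the desired inequality after re-embedding $R\sigma$ into the same surrounding context as $L\sigma$.

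First I would establish closure under substitution, which is essentially a universal-quantifier argument since rewrite rules are schemes in meta-variables. Then I would prove closure under contexts by structural induction on the context built out of the game operators introduced in Definition \ref{gt}, namely dualization, $\vee$, $\wedge$, $\circ$, and $\parallel$; in each case one must check that applying the constructor to two arguments related by $>_{lpo}$ in one slot preserves the strict decrease at the top level.

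With the compatibility lemma in hand the theorem follows: any infinite reduction sequence $G_0 \to G_1 \to G_2 \to \cdots$ lifts to an infinite descending chain $G_0 >_{lpo} G_1 >_{lpo} G_2 >_{lpo} \cdots$, and since $>_{lpo}$ refines the well-founded $>$ (or is itself well-founded as a lexicographic path ordering generated from $>$) no such chain can exist, so the TRS is strongly normalizing in the sense of the definition given just above.

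The step I expect to be the main obstacle is the contextual closure of $>_{lpo}$. The definition in the excerpt literally sets $G >_{lpo} H$ to mean $G \to^+ H$, which would render the hypothesis of the theorem vacuously true and the conclusion circular; I would therefore interpret $>_{lpo}$ in the conventional Dershowitz sense of a lexicographic path ordering generated from the base well-founded ordering $>$, and then the genuine work lies in the nested case analysis verifying both well-foundedness and stability of this generated ordering under all of the game constructors at once, which is the part that would consume most of a full write-up.
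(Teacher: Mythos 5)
The paper never proves this theorem: it is quoted as a standard meta-theorem of term rewriting (the lexicographic path ordering termination criterion) and is used later only as a tool in the elimination theorems, so there is no in-paper proof to compare against. Your outline is the standard textbook argument and is essentially correct: read $>_{lpo}$ as the Dershowitz lexicographic path ordering generated by the well-founded precedence $>$, show it is a reduction order (stable under substitutions and compatible with the operators $^d$, $\vee$, $\wedge$, $\circ$, $\parallel$), deduce that every rewrite step $G\rightarrow H$ yields $G>_{lpo}H$, and conclude that an infinite reduction sequence would produce an infinite $>_{lpo}$-descending chain. You are also right that the paper's own definition of $>_{lpo}$ (as $\rightarrow^+$) would make the hypothesis circular, and that the conventional LPO is the intended reading. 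Two cautions. First, the parenthetical ``$>_{lpo}$ refines the well-founded $>$'' is not a valid escape: well-foundedness of the precedence does not pass to the induced term ordering by any refinement argument, and the well-foundedness of the LPO is precisely the substantive content here, normally obtained via Kruskal's tree theorem (the signature is finite) or a minimal-bad-sequence argument; you must commit to that alternative and actually carry it out. Second, for contextual closure you only need one-step compatibility with each operator, i.e.\ if $G>_{lpo}G'$ then $f(\ldots,G,\ldots)>_{lpo}f(\ldots,G',\ldots)$; for LPO this drops out of the clause comparing a term against a term rooted in the same symbol with lexicographically smaller arguments, after which induction on the depth of the context is immediate, so that part is less of an obstacle than you anticipate.
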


\subsection{Labeled Transition System}\label{lts}

\begin{definition}[Labeled transition system]
A transition is a triple $(s,g_a,s')$ with $g_a\in \mathcal{G}_{at}$, or a pair (s, P) with $P$ a predicate, where $s,s'\in S$ of states. A labeled transition system (LTS) is possibly infinite set of transitions. An LTS is finitely branching if each of its states has only finitely many outgoing transitions.
\end{definition}

\begin{definition}[Transition system specification]
A transition rule $\varrho$ is an expression of the form $\frac{\varpi}{\pi}$, with $\varpi$ a set of expressions $G\xrightarrow{g_a}^i G'$ with $a\in A;i=1,2$, and $tP$ with $G,G'$ are game terms, called the (positive) premises of $\varrho$, and $\pi$ an expression $H\xrightarrow{g_a}^i H'$ or $GP$ with $H,H'$ are game terms, called the conclusion of $\varrho$. The left-hand side of $\pi$ is called the source of $\varrho$. A transition rule is closed if it does not contain any variables. A transition system specification (TSS) is a (possible infinite) set of transition rules.
\end{definition}

\begin{definition}[Congruence]
An equivalence relation $\mathcal{E}$ on game board $B$ is a congruence if for each $f\in GL$, if $G_i\mathcal{E} H_i$ for $i\in\{1,\cdots,ar(f)\}$, then $f(G_1,\cdots,G_{ar(f)})\mathcal{E} f(H_1,\cdots,H_{ar(f)})$.
\end{definition}

\begin{definition}[Conservative extension]
Let $T_0$ and $T_1$ be TSSs over $GL_0$ and $GL_1$, respectively. The TSS $T_0\oplus T_1$ is a conservative extension of $T_0$ if the LTSs generated by $T_0$ and $T_0\oplus T_1$ contain exactly the same transitions $G\xrightarrow{a}^i G'$ and $GP$ with the game term $G$.
\end{definition}

\begin{definition}[Source-dependency]
The source-dependent variables in a transition rule of $\varrho$ are defined inductively as follows: (1) all variables in the source of $\varrho$ are source-dependent; (2) if $G\xrightarrow{g_a}^d G'$ is a premise of $\varrho$ and all variables in $G$ are source-dependent, then all variables in $G'$ are source-dependent. A transition rule is source-dependent if all its variables are. A TSS is source-dependent if all its rules are.
\end{definition}

\begin{definition}[Freshness]
Let $T_0$ and $T_1$ be TSSs over signatures $GL_0$ and $GL_1$, respectively. A term in $\mathbb{T}(T_0\oplus T_1)$ is said to be fresh if it contains a function symbol from $GL_1\setminus GL_0$. Similarly, a transition label or predicate symbol in $T_1$ is fresh if it does not occur in $T_0$.
\end{definition}

\begin{theorem}[Conservative extension]
Let $T_0$ and $T_1$ be TSSs over $GL_0$ and $GL_1$, respectively, where $T_0$ and $T_0\oplus T_1$ are positive after reduction. Under the following conditions, $T_0\oplus T_1$ is a conservative extension of $T_0$. (1) $T_0$ is source-dependent. (2) For each $\varrho\in T_1$, either the source of $\varrho$ is fresh, or $\varrho$ has a premise of the form $G\xrightarrow{g_a}^d G'$ or $GP$, where $G$ is a game term, all variables in $G$ occur in the source of $\varrho$ and $G'$, $g_a$ or $P$ is fresh.
\end{theorem}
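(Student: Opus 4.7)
The plan is to establish the two inclusions between the transitions derivable in $T_0$ and those derivable in $T_0\oplus T_1$ when the source is restricted to $T_0$-terms. The direction from $T_0$ into $T_0\oplus T_1$ is immediate because every rule of $T_0$ is a rule of $T_0\oplus T_1$, so every $T_0$-proof tree is already a proof tree in $T_0\oplus T_1$. The non-trivial direction is the converse: any transition $G\xrightarrow{g_a}^i G'$ or predicate $GP$ derivable in $T_0\oplus T_1$ with $G$ a $T_0$-term is in fact derivable in $T_0$.

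To make the induction go through I would strengthen this into a single invariant, proved by induction on proof-tree height in $T_0\oplus T_1$: for every derivable conclusion with source in $T_0$, the label $g_a$, target $G'$ and predicate $P$ are all non-fresh and the conclusion admits a proof in $T_0$ alone. The hypothesis that $T_0$ and $T_0\oplus T_1$ are positive after reduction is what makes induction on proof-tree height meaningful, since it lets us work with an equivalent positive TSS rather than reason about negative premises directly.

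The inductive step splits on the provenance of the last rule $\varrho$ applied. If $\varrho\in T_0$, then all symbols in $\varrho$ lie in $GL_0$, and the source-dependency of $T_0$ propagates $T_0$-termhood from the source through every variable assignment, so each premise has a $T_0$-term source; the induction hypothesis then applies to each subproof and reassembly yields a $T_0$-derivation, with labels and targets automatically non-fresh. If $\varrho\in T_1$, I would derive a contradiction: since the source of $\varrho$ is instantiated to a $T_0$-term it is not fresh, so condition (2) guarantees a premise $H\xrightarrow{g_b}^j H'$ or $HP'$ whose variables occur in the source and in which $H'$, $g_b$, or $P'$ is fresh. The source instantiation maps every one of these variables to a $T_0$-term, so the instantiated premise has a $T_0$-term as source, and the induction hypothesis forces its label, target, or predicate to be non-fresh, contradicting the freshness supplied by condition~(2).

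I expect the main obstacle to be the bookkeeping around the reduction to a positive TSS: one must check that the reduction neither invents nor erases fresh symbols in a way that spoils the invariant, and that source-dependency of $T_0$ survives reduction. A secondary subtlety is the premise term $H$ in condition (2): if $H$ is permitted to contain operation symbols from $GL_1$, then $H[\sigma]$ need not literally be a $T_0$-term, and one must argue instead that either $H$'s outer symbol is itself fresh (making the whole rule unable to match under a $T_0$ source at the premise level) or the freshness in $g_b$, $H'$ or $P'$ still triggers the induction-hypothesis contradiction via the derivation of the premise subproof.
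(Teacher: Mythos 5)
The paper never proves this theorem: it is stated as an imported meta-result (the standard conservative-extension theorem of structural operational semantics in the style of Fokkink and Verhoef), so there is no in-paper argument to measure yours against, and your sketch is essentially the standard textbook proof of that meta-theorem. The structure is right: the easy inclusion because every $T_0$-proof is a $T_0\oplus T_1$-proof; induction on proof height in $T_0\oplus T_1$ (made legitimate by the positive-after-reduction hypothesis) with the strengthened invariant that any derivable transition or predicate whose source is a $GL_0$-term has non-fresh label, target and predicate and is already derivable in $T_0$; the $T_0$-case closed by source-dependency propagating $GL_0$-termhood of the instantiated premises (and the induction hypothesis keeping premise targets non-fresh, so the propagation continues); the $T_1$-case refuted by playing condition (2) against the induction hypothesis. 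One caveat concerns your final paragraph: if the distinguished premise term $H$ of condition (2) were allowed to contain symbols of $GL_1\setminus GL_0$, neither of your proposed repairs works --- a premise whose instantiated source is fresh can perfectly well be derivable in $T_0\oplus T_1$ (via $T_1$-rules with fresh sources), so the induction hypothesis does not reach that subproof and no contradiction with the freshness of $g_b$, $H'$ or $P'$ is obtained. The theorem is only correct as usually stated, with the left-hand side of that premise required to be a term over the original signature $GL_0$; that is how the paper's loose phrase ``$G$ is a game term'' must be read, and under that reading $\sigma(H)$ is a closed $GL_0$-term, the induction hypothesis applies to it, and your main argument goes through.
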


\subsection{Game Equivalence}\label{ge}

\begin{definition}[Outcome conditions]
$GL=\langle S, \{\rho^i_a\}_{a\in A;i=1,2}\rangle$ are called game boards, where $S$ is the set of states and $\rho^i_a\subseteq S\times P(S)$ are outcome relations, which satisfy the following two forcing conditions:

\begin{enumerate}
  \item monotonicity (MON): for any $s\in S$, and $X\subseteq Y\subseteq S$, if $s\rho^i_a X$, then $s\rho^i_a Y$;
  \item consistency (CON): for any $s\in S, X\subseteq S$, if $s\rho^1_a X$, then not $s\rho^2_a (S-X)$.
\end{enumerate}

And the following optional conditions:

\begin{enumerate}
  \item termination (FIN): for any $s\in S$, then $s\rho^i_a S$, and the class of terminating game boards are denoted $\textbf{FIN}$;
  \item determinacy (DET): $s\rho^2_a (S-X)$ iff $s\rho^1_a X$, and the class of determined game boards are denoted $\textbf{DET}$.
\end{enumerate}

The outcome relation $\rho^i_G$ for any game term $G$ can be defined inductively according to the structure of $G$.
\end{definition}

\begin{definition}[Game equivalence]\label{dge}
For game terms $G_1$ and $G_2$ on game board $B$, if $\rho^i_{G_1}\subseteq \rho^i_{G_2}$, then $G_1$ is i-included in $G_2$ on $B$, denoted $G_1\subseteq_i g_2$; if $G_1\subseteq_1 G_2$ and $G_1\subseteq_2 G_2$, then $G_1$ is included in $G_2$ on $B$, denoted $B\models G_1\preceq G_2$; if $B\models G_1\preceq G_2$ for any $B$, then $G_1\preceq G_2$ is called a valid term inclusion, denoted $\models G_1\preceq G_2$.

If $G_1$ and $G_2$ are assigned the same outcome relation in $B$, then they are game equivalent on $B$, denoted $B\models G_1\sim G_2$; if $B\models G_1\sim G_2$ for any game board $B$, then $G_1\sim G_2$ is a valid term identity, denoted $\models G_1 \sim G_2$.

It is easy to see that game equivalence is an equivalent relation.
\end{definition}

\begin{definition}[Weak game equivalence]\label{dwge}
For game terms $G_1$ and $G_2$ on game board $B$, $\iota_{I_G}$ with $I\subseteq G$ renames all $g_a\in I_G$ into $\iota$, if $\rho^i_{\iota_{I_{G_1}}(G_1)}\subseteq \rho^i_{\iota_{I_{G_2}}(G_2)}$, then $G_1$ is weakly i-included in $G_2$ on $B$, denoted $G_1\sqsubseteq_i g_2$; if $G_1\sqsubseteq_1 G_2$ and $G_1\sqsubseteq_2 G_2$, then $G_1$ is weakly included in $G_2$ on $B$, denoted $B\models G_1\ll G_2$; if $B\models G_1\ll G_2$ for any $B$, then $G_1\ll G_2$ is called a valid weak term inclusion, denoted $\models G_1\ll G_2$.

If $\iota_{I_{G_1}}(G_1)$ and $\iota_{I_{G_2}}(G_2)$ are assigned the same outcome relation in $B$, then they are weak game equivalent on $B$, denoted $B\models G_1\approx G_2$; if $B\models G_1\approx G_2$ for any game board $B$, then $G_1\approx G_2$ is a valid term identity, denoted $\models G_1 \approx G_2$.

It is easy to see that weak game equivalence is an equivalent relation.
\end{definition}

\section{Basic Algebra of Games}\label{bag}

In this section, we will discuss Basic Algebra of Games, abbreviated BAG, which include game operations: choice of the first player $\vee$ ($\vee^1$), choice of the second player $\wedge$ ($\vee^2$), dualization $^d$ and composition of games $\circ$.

\subsection{Axiom System of BAG}

In the following, let $g_a, g_b, g_a', g_b'\in \mathcal{G}_{at}$, and let variables $x,y,z$ range over the set of game terms, $G,H$ range over the set of closed terms. The set of axioms of BAG consists of the laws given in Table \ref{AxiomsForBAG}.

\begin{center}
    \begin{table}
        \begin{tabular}{@{}ll@{}}
            \hline No. &Axiom\\
            $G1$ & $x\vee x = x\quad x\wedge x =x$\\
            $G2$ & $x\vee y = y\vee x\quad x\wedge y = y\wedge x$\\
            $G3$ & $x\vee (y\vee z)=(x\vee y)\vee z\quad x\wedge (y\wedge z)=(x\wedge y)\wedge z$\\
            $G4$ & $x\vee(x\wedge y)=x\quad x\wedge (x\vee y)=x$\\
            $G5$ & $x\vee(y\wedge z)=(x\vee y)\wedge (x\vee z)\quad x\wedge (y\vee z)=(x\wedge y)\vee (x\wedge z)$\\
            $G6$ & $(x^d)^d = x$\\
            $G7$ & $(x\vee y)^d= x^d\wedge y^d\quad (x\wedge y)^d = x^d\vee y^d$\\
            $G8$ & $(x\circ y)\circ z = x\circ(y\circ z)$\\
            $G9$ & $(x\vee y)\circ z = (x\circ z)\vee (y\circ z)\quad (x\wedge y)\circ z = (x\circ z)\wedge (y\circ z)$\\
            $G10$ & $x^d\circ y^d=(x\circ y)^d$\\
%            $G11$ & $y\preceq z \rightarrow x\circ y\preceq x\circ z$\\
            $G11$ & $x\circ \iota = \iota\circ x = x$\\
            $G12$ & $\iota^d = \iota$\\
        \end{tabular}
        \caption{Axioms of BAG}
        \label{AxiomsForBAG}
    \end{table}
\end{center}

\subsection{Properties of BAG}

\begin{definition}[Basic terms of BAG]\label{BTBAG}
The set of basic terms of BAG, $\mathcal{B}(BAG)$, is inductively defined as follows:
\begin{enumerate}
  \item $\mathcal{G}_{at}\subset\mathcal{B}(BAG)$;
  \item $\mathcal{G}_{at}^d\subset\mathcal{B}(BAG)$;
  \item if $g_a\in \mathcal{G}_{at}, G\in\mathcal{B}(BAG)$ then $g_a\circ G\in\mathcal{B}(BAG)$;
  \item if $G,H\in\mathcal{B}(BAG)$ then $G\vee H\in\mathcal{B}(BAG)$;
  \item if $G,H\in\mathcal{B}(BAG)$ then $G\wedge H\in\mathcal{B}(BAG)$.
\end{enumerate}
\end{definition}

\begin{theorem}[Elimination theorem of BAG]\label{ETBAG}
Let $G$ be a closed BAG term. Then there is a basic BAG term $H$ such that $BAG\vdash G=H$.
\end{theorem}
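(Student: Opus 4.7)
The plan is to argue via term rewriting. I orient the BAG axioms $G6$ through $G12$ as left-to-right rewriting rules directed toward the basic form of Definition \ref{BTBAG}. Specifically, $G6$, $G7$, and $G10$ (used as $(x\circ y)^d\to x^d\circ y^d$) push dualization inward through $\vee$, $\wedge$, and $\circ$; $G12$ eliminates $\iota^d$; $G8$ right-associates composition; $G9$ distributes composition over choice; and $G11$ absorbs $\iota$ on either side of a composition. The remaining axioms $G1$--$G5$ are not oriented as rewriting rules, since the basic-term definition freely admits arbitrary alternations of $\vee$ and $\wedge$ and so no normalization among them is needed.

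I would next apply Theorem \ref{SN} to obtain strong normalization, using a well-founded ordering supplied by a lexicographic path order. A suitable precedence places $^d$ above all other operators, $\circ$ above $\vee$ and $\wedge$, and makes $\iota$ minimal. Verifying that each oriented rule strictly decreases under $>_{lpo}$ is a routine case analysis; the delicate cases are the distributive rule $G9$, which duplicates the right-hand argument, and the De Morgan rules $G7$ and $G10$, which duplicate the dualization across both children---these are exactly the situations in which an LPO is required instead of a plain weight function.

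With strong normalization established, I would show by structural induction on the normal form $H$ that $H \in \mathcal{B}(BAG)$. If $H = K^d$, then $K$ is neither compound (else one of $G6$, $G7$, $G10$ applies) nor $\iota$ (else $G12$ applies), so $K\in \mathcal{G}_{at}$ and $H$ is a literal. If $H = K\circ L$, then $K$ cannot be a composition (blocked by $G8$), a choice (blocked by $G9$), or $\iota$ (blocked by $G11$), which forces $K$ to be an atomic literal, while $L$ is itself a normal form and inductively basic. The cases $H = K\vee L$ and $H = K\wedge L$ follow directly from the induction hypothesis on $K$ and $L$. Soundness of each rewriting step as a BAG equation then yields $BAG\vdash G = H$ for the normal form $H$ reached from any closed term $G$.

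The principal obstacle will be the construction and verification of the well-founded ordering $>_{lpo}$: because $G9$ duplicates an argument on the right-hand side and $G7$ and $G10$ duplicate the outer dualization across both children, a naive term-size or multiset-weight measure does not strictly decrease on these rules, and the lexicographic path order is the canonical remedy. The detailed rule-by-rule verification of the decrease for the distributive and De Morgan rules is where the bulk of the technical work lies; once that is in place, the structural inventory of normal forms and the appeal to soundness are essentially bookkeeping.
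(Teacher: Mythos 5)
Your proposal is correct and takes essentially the same route as the paper: orient the axioms as a term rewriting system, obtain strong normalization from Theorem \ref{SN} via a lexicographic path order with $^d$ above $\circ$ and $\circ$ above the choice operators, and then argue by case analysis that closed normal forms lie in $\mathcal{B}(BAG)$. Your two deviations---orienting $G10$ inward as $(x\circ y)^d\to x^d\circ y^d$ and dropping $G1$--$G5$ from the rewrite system---do not change the substance, since basic terms admit arbitrary $\vee/\wedge$ combinations and your normal-form analysis matches the paper's at the same level of rigor (both, in the composition case, tacitly admit a dual literal $g_a^d$ in the position that Definition \ref{BTBAG} literally reserves for an atomic game, a convention the paper's own proof adopts by writing $g_a^d\in\mathcal{G}_{at}$).
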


\begin{proof}
(1) Firstly, suppose that the following ordering on the signature of BAG is defined: $^d > \circ > \wedge >\vee$ and the symbol $^d$ is given the lexicographical status for the first argument, then for each rewrite rule $G\rightarrow H$ in Table \ref{TRSForBAG} relation $G>_{lpo} H$ can easily be proved. We obtain that the term rewrite system shown in Table \ref{TRSForBAG} is strongly normalizing, for it has finitely many rewriting rules, and $>$ is a well-founded ordering on the signature of BAG, and if $G>_{lpo} H$, for each rewriting rule $G\rightarrow H$ is in Table \ref{TRSForBAG}.

\begin{center}
    \begin{table}
        \begin{tabular}{@{}ll@{}}
            \hline
            No. & Rewriting Rule\\
            $RG1$ & $x\vee x \rightarrow x\quad x\wedge x \rightarrow x$\\
            $RG3$ & $x\vee (y\vee z)\rightarrow (x\vee y)\vee z\quad x\wedge (y\wedge z)\rightarrow (x\wedge y)\wedge z$\\
            $RG4$ & $x\vee(x\wedge y)\rightarrow x\quad x\wedge (x\vee y)\rightarrow x$\\
            $RG5$ & $x\vee(y\wedge z)\rightarrow (x\vee y)\wedge (x\vee z)\quad x\wedge (y\vee z)\rightarrow(x\wedge y)\vee (x\wedge z)$\\
            $RG6$ & $(x^d)^d \rightarrow x$\\
            $RG7$ & $(x\vee y)^d\rightarrow x^d\wedge y^d\quad (x\wedge y)^d \rightarrow x^d\vee y^d$\\
            $RG8$ & $(x\circ y)\circ z \rightarrow x\circ(y\circ z)$\\
            $RG9$ & $(x\vee y)\circ z \rightarrow (x\circ z)\vee (y\circ z)\quad (x\wedge y)\circ z \rightarrow (x\circ z)\wedge (y\circ z)$\\
            $RG10$ & $x^d\circ y^d\rightarrow (x\circ y)^d$\\
            $RG11$ & $x\circ \iota \rightarrow x \quad \iota\circ x \rightarrow x$\\
            $RG12$ & $\iota^d \rightarrow \iota$\\
        \end{tabular}
        \caption{Term rewrite system of BAG}
        \label{TRSForBAG}
    \end{table}
\end{center}

(2) Then we prove that the normal forms of closed BAG terms are basic BAG terms.

Suppose that $G$ is a normal form of some closed BAG term and suppose that $G$ is not a basic term. Let $G'$ denote the smallest sub-term of $G$ which is not a basic term. It implies that each sub-term of $G'$ is a basic term. Then we prove that $G$ is not a term in normal form. It is sufficient to induct on the structure of $G'$:

\begin{itemize}
  \item Case $G'\equiv g_a, g_a\in \mathcal{G}_{at}$. $G'$ is a basic term, which contradicts the assumption that $G'$ is not a basic term, so this case should not occur.
  \item Case $G'\equiv g_a^d, g_a^d\in \mathcal{G}_{at}$. $G'$ is a basic term, which contradicts the assumption that $G'$ is not a basic term, so this case should not occur.
  \item Case $G'\equiv G_1\circ G_2$. By induction on the structure of the basic term $G_1$:
      \begin{itemize}
        \item Subcase $G_1\in \mathcal{G}_{at}$. $G'$ would be a basic term, which contradicts the assumption that $G'$ is not a basic term;
        \item Subcase $G_1\equiv g_a\circ G_1'$. $RG8$ rewriting rule can be applied. So $G$ is not a normal form;
        \item Subcase $G_1\equiv G_1'\wedge G_1''$. $RG9$ rewriting rule can be applied. So $G$ is not a normal form;
        \item Subcase $G_1\equiv G_1'\vee G_1''$. $RG9$ rewriting rule can be applied. So $G$ is not a normal form.
      \end{itemize}
  \item Case $G'\equiv G_1\wedge G_2$. By induction on the structure of the basic terms both $G_1$ and $G_2$, all subcases will lead to that $G'$ would be a basic term, which contradicts the assumption that $G'$ is not a basic term.
  \item Case $G'\equiv G_1\vee G_2$. By induction on the structure of the basic terms both $G_1$ and $G_2$, all subcases will lead to that $G'$ would be a basic term, which contradicts the assumption that $G'$ is not a basic term.
\end{itemize}
\end{proof}

\subsection{Structured Operational Semantics of BAG}

In this subsection, we will define a term-deduction system which gives the operational semantics of BAG. We give the operational transition rules for atomic games, $\iota$, game operations $^d$, $\vee^d$ and $\circ$ as Table \ref{TRForBAG} shows. And the predicate $\xrightarrow{g_a}\surd$ represents successful termination after playing of the game $g_a$, the predicate $\xrightarrow{g_a}^d\surd$ represents successful termination after playing of the game $g_a$ by the player $d$.

\begin{center}
    \begin{table}
        $$\frac{}{g_a^d\xrightarrow{g_a}^d\surd}\quad \frac{}{\iota\rightarrow\surd}$$
        $$\frac{x\xrightarrow{g_a}^d\surd}{x\vee^d y\xrightarrow{g_a}^d\surd} \quad\frac{x\xrightarrow{g_a}^dx'}{x\vee^d y\xrightarrow{g_a}^d x'} \quad\frac{y\xrightarrow{g_a}^d\surd}{x\vee^d y\xrightarrow{g_a}^d\surd} \quad\frac{y\xrightarrow{g_a}^d y'}{x\vee^d y\xrightarrow{g_a}^d y'}$$
        $$\frac{x\xrightarrow{g_a}\surd}{x\circ y\xrightarrow{g_a} y} \quad\frac{x\xrightarrow{g_a}x'}{x\circ y\xrightarrow{g_a}x'\circ y}$$
        \caption{Transition rules of BAG}
        \label{TRForBAG}
    \end{table}
\end{center}

\begin{theorem}[Congruence of BAG with respect to game equivalence]
Game equivalence $\sim$ is a congruence with respect to BAG.
\end{theorem}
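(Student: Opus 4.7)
My plan is to reduce the statement to a routine structural check, using the fact that, by the final sentence of the outcome-conditions definition, the outcome relation $\rho^i_G$ of any compound game term is defined inductively in terms of the outcome relations of its immediate subterms. Once those inductive clauses are spelled out, each is manifestly a function of the $\rho^i$'s of the subterms alone, so substituting game-equivalent (i.e.\ outcome-identical) subterms produces the same outcome relation at the compound, which is precisely the congruence requirement for the constructor in question.

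Concretely I would fix an arbitrary game board $B$, assume $G_k \sim H_k$ on $B$ (so $\rho^i_{G_k} = \rho^i_{H_k}$ for $i=1,2$ and $k=1,2$), and then verify each BAG operator separately. For $\vee$ (first player's choice), $s\rho^1_{G\vee H}X$ iff $s\rho^1_G X$ or $s\rho^1_H X$, while $\rho^2_{G\vee H}$ is determined by CON and MON from the first-player clause; $\wedge$ is the mirror case. For dualization, $\rho^i_{G^d} = \rho^{3-i}_G$, so equality of the subterm relations yields equality of the dual's relations at once. For composition, $s\rho^i_{G\circ H}X$ iff there is $Y\subseteq S$ with $s\rho^i_G Y$ and $y\rho^i_H X$ for every $y\in Y$, again a pure function of the subterm outcome relations. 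Quantifying over $B$ then gives $\models f(G_1,G_2)\sim f(H_1,H_2)$ for every operator $f\in\{\vee,\wedge,{}^d,\circ\}$, which is the congruence condition.

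An alternative route is to appeal directly to Table \ref{TRForBAG}: every rule has a single source variable, only positive premises of the form $x\xrightarrow{g_a}^d\surd$ or $x\xrightarrow{g_a}^d x'$, and no lookahead, placing the TSS in a standard well-behaved SOS rule format whose generic meta-theorem delivers congruence of the induced bisimulation-style equivalence automatically; one then checks that this equivalence agrees with $\sim$ on closed terms by structural induction over the basic terms provided by Theorem \ref{ETBAG}.

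The main obstacle is the composition case. The paper only announces that $\rho^i_G$ is defined inductively without exhibiting the clauses, and everything hinges on the composition clause being genuinely compositional, i.e.\ expressible using $\rho^i_G$ and $\rho^i_H$ as black boxes rather than through the syntactic form of $H$ beneath each intermediate state reachable in $G$. Writing out that clause explicitly, and verifying in passing that the MON and CON conditions propagate through the construction, is the step I would treat most carefully; the remaining operators then reduce to routine substitution of equals for equals.
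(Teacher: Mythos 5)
Your proposal is correct, and it is in fact more substantive than the paper's own argument, which performs the same operator-by-operator case split but disposes of each case with a one-line appeal to Definition \ref{dge} together with the transition rules of Table \ref{TRForBAG}, never exhibiting any clauses. Your primary route is the genuinely denotational one: since $\sim$ is defined as equality of the outcome relations $\rho^i$, congruence follows once one writes the inductive clauses showing $\rho^i_{f(G_1,G_2)}$ is a function of $\rho^1_{G_k},\rho^2_{G_k}$ alone, and you correctly locate the crux in the composition clause, which the paper leaves entirely implicit (the final sentence of the outcome-conditions definition merely asserts that such clauses exist). This buys an actual proof where the paper has an assertion, at the cost of having to supply the clauses yourself; the standard ones are $\rho^1_{G\vee H}=\rho^1_G\cup\rho^1_H$, $\rho^2_{G\vee H}=\rho^2_G\cap\rho^2_H$ (your remark that the second-player relation is ``determined by CON and MON'' is not right without DET, but harmless here since only functionality in the subterm relations is needed), $\rho^i_{G^d}=\rho^{3-i}_G$, and the relational composition you state for $\circ$. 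Your alternative operational route via a well-behaved SOS rule format is closer in spirit to what the paper gestures at, but note that such meta-theorems yield congruence of a bisimulation-style equivalence on the LTS, not of $\sim$ itself; the bridging step you mention (identifying that equivalence with $\sim$ on closed terms) is essential and is likewise absent from the paper, so if you take that route you must actually carry it out rather than inherit it.
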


\begin{proof}
It is sufficient to prove that game equivalence is preserved by the game operations: $^d$, $\vee$, $\wedge$ and $\circ$.

(1) Case of $^d$. Suppose that $G_1\sim G_2$, it suffices to prove $G_1^d\sim G_2^d$. It can be immediately gotten from the definition of game equivalence (see Definition \ref{dge}).

(2) Case of $\vee$. Suppose that $G_1\sim G_2$ and $H_1\sim H_2$, it suffices to prove that $G_1\vee H_1\sim G_2\vee H_2$. It can be immediately gotten from the definition of game equivalence (Definition \ref{dge}) and transition rules of $\vee$ in Table \ref{TRForBAG}.

(3) Case of $\wedge$. Suppose that $G_1\sim G_2$ and $H_1\sim H_2$, it suffices to prove that $G_1\wedge H_1\sim G_2\wedge H_2$. It can be immediately gotten from the definition of game equivalence (Definition \ref{dge}) and transition rules of $\wedge$ in Table \ref{TRForBAG}.

(4) Case of $\circ$. Suppose that $G_1\sim G_2$ and $H_1\sim H_2$, it suffices to prove that $G_1\circ H_1\sim G_2\circ H_2$. It can be immediately gotten from the definition of game equivalence (Definition \ref{dge}) and transition rules of $\circ$ in Table \ref{TRForBAG}.
\end{proof}

\begin{theorem}[Soundness of BAG modulo game equivalence]
Let $x$ and $y$ be BAG terms. If $BAG\vdash x=y$, then $x\sim y$.
\end{theorem}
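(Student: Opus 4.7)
The plan is to reduce soundness to the three standard ingredients of an equational-logic soundness argument: (i) $\sim$ is an equivalence relation, (ii) $\sim$ is a congruence for the operators of $BAG$, and (iii) every axiom $G1$–$G12$ is itself a valid identity under $\sim$. Then a straightforward induction on the length of a $BAG$-derivation of $x=y$ will conclude $x\sim y$, since the equational-logic rules (reflexivity, symmetry, transitivity, congruence, substitution of axiom instances) are exactly what (i)–(iii) allow us to mimic at the semantic level.

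Ingredient (i) is recorded at the end of Definition \ref{dge}, and ingredient (ii) has just been established in the preceding theorem (Congruence of BAG with respect to $\sim$), so I would cite both and move immediately to (iii). For each axiom I would show that the two sides induce the same outcome relations on an arbitrary game board $B$, using the transition rules in Table \ref{TRForBAG} and the inductive definition of $\rho^i_G$. Concretely, I would treat the axioms in blocks: $G1$–$G5$ are the propositional-style laws for $\vee,\wedge$ (idempotence, commutativity, associativity, absorption and distributivity), and follow by straightforward case analysis on which disjunct/conjunct provides the transition; $G6$–$G7$ and $G12$ handle dualization and De Morgan, which reduce to the symmetry $d\leftrightarrow d$ in the transition rules for $\vee^d$; $G8$–$G10$ concern composition and are verified by matching the two clauses of the $\circ$-rules (one producing $\surd$ and the continuation $y$, the other producing $x'\circ y$); $G11$ uses the single transition $\iota\rightarrow\surd$ to see that prepending or appending $\iota$ leaves the generated outcome relation unchanged.

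Once each axiom is shown to yield $x\sim y$, I would close with the induction on derivations: the base case is an axiom instance, handled by (iii) together with the fact that the axioms remain valid under substitution of game terms (which follows because $\rho^i_G$ is defined purely inductively on $G$, so substituting $\sim$-equivalent subterms into an axiom preserves the outcome relation, by (ii)); the inductive steps for reflexivity, symmetry, and transitivity use (i), and the step for replacement inside a context uses (ii).

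The main obstacle I anticipate is the distributivity and absorption laws $G4$ and $G5$, where one must check that the two sides not only agree transition-by-transition but also agree as outcome relations satisfying (MON) and (CON); in particular the absorption $x\vee(x\wedge y)=x$ requires noticing that every $\rho^i_x$-outcome is already among the $\rho^i_{x\vee(x\wedge y)}$-outcomes and conversely, using monotonicity to absorb the extra $x\wedge y$ branch. Everything else is essentially routine rule-chasing once the congruence result is available.
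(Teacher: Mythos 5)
Your proposal is correct and takes essentially the same route as the paper: reduce soundness to the facts that $\sim$ is an equivalence relation and a congruence for the BAG operators, and then verify each axiom of Table \ref{AxiomsForBAG} modulo $\sim$ using Definition \ref{dge} and the transition rules of Table \ref{TRForBAG}. The only difference is that you sketch the axiom-by-axiom checks and the closing induction on derivations explicitly, whereas the paper leaves these verifications to the reader as an exercise.
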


\begin{proof}
Since game equivalence is both an equivalent and a congruent relation, we only need to check if each axiom in Table \ref{AxiomsForBAG} is sound modulo game equivalence, according to the definition of game equivalence (Definition \ref{dge}) and transition rules in Table \ref{TRForBAG}. The checks are left to the readers as an exercise.
\end{proof}

\begin{theorem}[Completeness of BAG modulo game equivalence]
Let $G$ and $H$ be closed BAG terms, if $G\sim H$ then $G=H$.
\end{theorem}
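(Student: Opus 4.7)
The plan is to reduce completeness on arbitrary closed terms to the case of basic terms, and from there to unique normal forms. By the elimination theorem (Theorem~\ref{ETBAG}) there exist basic BAG terms $G'$ and $H'$ with $BAG\vdash G=G'$ and $BAG\vdash H=H'$. By the preceding soundness theorem $G\sim G'$ and $H\sim H'$, so the hypothesis $G\sim H$ yields $G'\sim H'$. It therefore suffices to prove $BAG\vdash G'=H'$ whenever $G'$ and $H'$ are basic terms with $G'\sim H'$.

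Next I would supplement the strong normalization of the TRS in Table~\ref{TRSForBAG} (already established inside the proof of Theorem~\ref{ETBAG}) with confluence. This requires a critical-pair analysis: the main overlaps come from the distributivity rules $RG5$ and $RG9$, the absorption rules $RG4$, and the idempotency rule $RG1$, and each critical pair can be joined by applying the remaining rewrite rules. Combined with strong normalization, this yields a unique normal form $N(G')$ for every closed basic term $G'$. Soundness now gives $N(G')\sim N(H')$, and the task reduces to showing that two basic terms in normal form which are game-equivalent must coincide, modulo the associativity, commutativity and idempotency of $\vee$ and $\wedge$ captured by $G1$--$G3$.

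The core step is an induction on the combined size of $N(G')$ and $N(H')$. A basic normal form has one of the shapes (i)~a literal $g_a$ or $g_a^d$, (ii)~a composition $g_a\circ G_0$, (iii)~$G_1\vee G_2$, or (iv)~$G_1\wedge G_2$. Reading the transition rules of Table~\ref{TRForBAG}, the outcome relations $\rho^1_{G'}$ and $\rho^2_{G'}$ expose the outermost operator: only a $\vee$-form offers a genuine first-player choice between non-equivalent continuations, only a $\wedge$-form offers the second-player dual, a composition yields a single initial $g_a$-transition whose continuation is $G_0$, and a literal terminates in one move. Matching outermost shapes and invoking the induction hypothesis on the continuations identified through these transitions closes the argument. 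The main obstacle is the Boolean-like behavior of the choice operators: because $\vee$ and $\wedge$ satisfy $G1$--$G5$, game-equivalent terms may legitimately reorder, duplicate, or distribute choices. My plan is to view a normal form as the canonical pair of nested sets of choice alternatives it represents and to compare such canonical representatives; confluence is essential here, because idempotency and absorption already collapse any redundancy that might otherwise allow a $\vee$-form and a $\wedge$-form with identical outcome relations to differ syntactically.
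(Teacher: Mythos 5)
Your opening reduction (eliminate to basic terms via Theorem \ref{ETBAG}, transfer $\sim$ through soundness) is exactly the paper's first step. The trouble begins with the middle of your plan: it rests on an unproved claim that the TRS of Table \ref{TRSForBAG} is confluent, and that claim is false. The overlap between $RG8$ and $RG10$ on $(x^d\circ y^d)\circ z$ produces the two reducts $x^d\circ(y^d\circ z)$ and $(x\circ y)^d\circ z$; instantiating $z$ by an atomic game $g_c$ (not a dual), both $g_a^d\circ(g_b^d\circ g_c)$ and $(g_a\circ g_b)^d\circ g_c$ are normal forms of $(g_a^d\circ g_b^d)\circ g_c$ and they differ even modulo the AC laws for $\vee$ and $\wedge$. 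Your inventory of critical pairs ($RG1$, $RG4$, $RG5$, $RG9$) misses this $^d$/$\circ$ interaction altogether, and the two opposite distributivity orientations inside $RG5$ make the blanket statement that ``each critical pair can be joined by the remaining rules'' anything but routine. So the unique normal forms your argument needs do not exist as claimed. Note that the paper never proves (or uses) confluence: it uses termination only for elimination, and then works with basic terms modulo AC of $\vee^i$, writing each as a flat sum of summands $G_1\vee^i\cdots\vee^i G_k$ with every summand an atomic game or of the form $H_1\circ H_2$.

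The second gap is the decisive final step. You correctly note that, because of $G5$, a $\vee$-form can be provably (hence semantically) equal to a $\wedge$-form, so ``matching the outermost operator'' cannot work as stated; but your remedy --- comparing ``canonical pairs of nested sets of choice alternatives'' --- is only announced, not carried out, and it is precisely where the completeness content lives. The paper discharges this step concretely: for normal forms $N$ and $N'$ it inducts on size and shows, using the transition rules ($N\xrightarrow{g_a}\surd$ for an atomic summand, $N\xrightarrow{H_1}H_2$ for a summand $H_1\circ H_2$), that every summand of $N$ is matched by a summand of $N'$ with game-equivalent (hence, by induction, $=_{AC}$-equal) continuation, so that $N=_{AC}N'$, and then closes with soundness exactly as you do. To repair your proposal, either replace the confluence step by this AC-normal-form/summand-matching induction, or genuinely establish a confluence-modulo-ACI result for a corrected rewrite system --- the latter is a substantial piece of work, not a remark.
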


\begin{proof}
Firstly, by the elimination theorem of BAG, we know that for each closed BAG term $G$, there exists a closed basic BAG term $G'$, such that $BAG\vdash G=G'$, so, we only need to consider closed basic BAG terms.

The basic terms (see Definition \ref{BTBAG}) modulo associativity and commutativity (AC) of $\vee^i$ (defined by axiom $RG2$ in Table \ref{AxiomsForBAG}), and this equivalence is denoted by $=_{AC}$. Then, each equivalence class $G$ modulo AC of $\vee^i$ has the following normal form

$$G_1\vee^i\cdots\vee^i G_k$$

with each $G_i$ either an atomic game or of the form $H_1\circ H_2$, and each $G_i$ is called the summand of $G$.

Now, we prove that for normal forms $N$ and $N'$, if $N\sim N'$ then $N=_{AC}N'$. It is sufficient to induct on the sizes of $N$ and $N'$.

\begin{itemize}
  \item Consider a summand $g_a$ of $N$. Then $N\xrightarrow{g_a}\surd$, so $N\sim N'$ implies $N'\xrightarrow{g_a}\surd$, meaning that $N'$ also contains the summand $g_a$.
  \item Consider a summand $H_1\circ H_2$ of $N$. Then $N\xrightarrow{H_1}H_2$, so $N\sim N'$ implies $N'\xrightarrow{H_1}H_2'$ with $H_2\sim H_2'$, meaning that $N'$ contains a summand $H_1\circ H_2'$. Since $H_2$ and $H_2'$ are normal forms and have sizes smaller than $N$ and $N'$, by the induction hypotheses $H_2\sim H_2'$ implies $H_2=_{AC} H_2'$.
\end{itemize}

So, we get $N=_{AC} N'$.

Finally, let $G$ and $H$ be basic terms, and $G\sim H$, there are normal forms $N$ and $N'$, such that $G=N$ and $H=N'$. The soundness theorem of BAG modulo game equivalence yields $G\sim N$ and $H\sim N'$, so $N\sim G\sim H\sim N'$. Since if $N\sim N'$ then $N=_{AC}N'$, $G=N=_{AC}N'=H$, as desired.
\end{proof}

\section{Algebra of Concurrent Games}\label{acg}

In this section, we added parallelism to BAG to support concurrent games \cite{CG1} \cite{CG2} \cite{CG3} \cite{CG4}, the result algebra is called Algebra of Concurrent Games, abbreviated ACG. ACG also includes an equational logic and structured operational semantics.

\subsection{Axiom System of ACG}

In the following, let $g_a, g_b, g_a', g_b'\in \mathcal{G}_{at}$, and let variables $x,y,z$ range over the set of game terms, $G,H$ range over the set of closed terms. The set of axioms of ACG consists of the laws given in Table \ref{AxiomsForACG}.

\begin{center}
    \begin{table}
        \begin{tabular}{@{}ll@{}}
            \hline No. &Axiom\\
            $CG1$ & $(x\parallel y)\parallel z = x\parallel(y\parallel z)$\\
            $CG2$ & $g_a\parallel (g_b\circ y) = (g_a\parallel g_b)\circ y$\\
            $CG3$ & $(g_a\circ x)\parallel g_b = (g_a\parallel g_b)\circ x$\\
            $CG4$ & $(g_a\circ x)\parallel (g_b\circ y) = (g_a\parallel g_b)\circ (x\parallel y)$\\
            $CG5$ & $(x\vee y)\parallel z = (x\parallel z)\vee (y\parallel z)$\\
            $CG6$ & $x\parallel (y\vee z) = (x\parallel y)\vee (x\parallel z)$\\
            $CG7$ & $(x\wedge y)\parallel z = (x\parallel z)\wedge (y\parallel z)$\\
            $CG8$ & $x\parallel (y\wedge z) = (x\parallel y)\wedge (x\parallel z)$\\
            $CG9$ & $(x\parallel y)^d= x^d\parallel y^d$\\
            $CG10$ & $\iota\parallel x=x$\\
            $CG11$ & $x\parallel\iota = x$\\
        \end{tabular}
        \caption{Axioms of ACG}
        \label{AxiomsForACG}
    \end{table}
\end{center}

\subsection{Properties of ACG}

\begin{definition}[Basic terms of ACG]\label{BTACG}
The set of basic terms of ACG, $\mathcal{B}(ACG)$, is inductively defined as follows:
\begin{enumerate}
  \item $\mathcal{G}_{at}\subset\mathcal{B}(ACG)$;
  \item $\mathcal{G}_{at}^d\subset\mathcal{B}(ACG)$;
  \item if $g_a\in \mathcal{G}_{at}, G\in\mathcal{B}(ACG)$ then $g_a\circ G\in\mathcal{B}(ACG)$;
  \item if $G,H\in\mathcal{B}(ACG)$ then $G\vee H\in\mathcal{B}(ACG)$;
  \item if $G,H\in\mathcal{B}(ACG)$ then $G\wedge H\in\mathcal{B}(ACG)$;
  \item if $G,H\in\mathcal{B}(ACG)$ then $G\parallel H\in\mathcal{B}(ACG)$.
\end{enumerate}
\end{definition}

\begin{theorem}[Elimination theorem of ACG]\label{ETACG}
Let $G$ be a closed ACG term. Then there is a basic ACG term $H$ such that $ACG\vdash G=H$.
\end{theorem}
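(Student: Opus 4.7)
The plan is to mimic the elimination proof for BAG (Theorem \ref{ETBAG}): build a term rewrite system obtained from the ACG axioms by orienting them left-to-right, establish strong normalization via the LPO criterion (Theorem \ref{SN}), and then show by structural induction that every normal form is a basic ACG term in the sense of Definition \ref{BTACG}. Since the BAG axioms are included in ACG, the starting point is the rewrite system of Table \ref{TRSForBAG} augmented with rules $RCG1$--$RCG11$ obtained by reading the equations $CG1$--$CG11$ from left to right: in particular $CG1$ associates $\parallel$ to the right, $CG2$--$CG4$ pull compositions out of parallels, $CG5$--$CG8$ distribute parallel over choice, $CG9$ pushes dualization through parallel, and $CG10$--$CG11$ absorb $\iota$.

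First I would fix an ordering on the extended signature, for instance $^d > \circ > \parallel > \wedge > \vee$, and verify $G >_{lpo} H$ for each new rule $G \to H$. The rules CG5--CG8 and CG9 are standard distributivity/homomorphism rules that decrease in LPO exactly like RG5 and RG7 do; CG1 orients like RG8 for composition; CG10 and CG11 are strictly smaller since $\iota$ is removed; the non-trivial cases are CG2--CG4, where I need to observe that on the left-hand side the outermost symbol is $\parallel$, whose arguments contain $\circ$, while on the right the outermost symbol $\circ$ is greater than $\parallel$ in the precedence and the left argument of $\circ$ is strictly smaller (the inner compositions have been dissolved). Invoking Theorem \ref{SN} together with the finite rule set then yields strong normalization of the combined system.

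Next I would prove that every normal form is a basic ACG term by induction on term structure, paralleling the BAG argument clause by clause and adding a case $G'\equiv G_1\parallel G_2$. Subdividing on the structure of the basic subterms $G_1$ and $G_2$, whenever either outermost symbol is $\vee$, $\wedge$, $\circ$ or $^d$, one of $RCG2$--$RCG9$ is applicable, and whenever either argument is $\iota$, $RCG10$ or $RCG11$ fires; the only remaining possibility is that $G_1$ and $G_2$ are both in the shape allowed by clauses (1), (2), (4), (5) and (6) of Definition \ref{BTACG} with no top-level $\circ$ in either side, which is itself covered by clause (6) and so is already a basic term. The cases $G'\equiv G_1\circ G_2$, $G_1\wedge G_2$, $G_1\vee G_2$ and $G_1^d$ are handled essentially as in the proof of Theorem \ref{ETBAG}, with the extra subcases where $G_1$ or $G_2$ has $\parallel$ as outermost symbol absorbed by the basic-term clause (6) or handled by $RG7$/$RG9$/$RCG9$.

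The main obstacle I expect is precisely showing that the parallel-compositional rules CG2--CG4 are LPO-decreasing under a precedence that simultaneously makes $RG9$ and $RCG2$--$RCG4$ oriented in mutually consistent directions, because $RG9$ distributes $\circ$ over $\vee,\wedge$ while $RCG2$--$RCG4$ pull $\circ$ out of $\parallel$; a careful choice of precedence together with lexicographic status for the relevant operators, as in the BAG proof, is required to make both orient downward. Once that termination check is in place, the rest of the argument is a routine extension of the BAG case.
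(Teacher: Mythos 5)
Your overall strategy is the same as the paper's: orient the ACG axioms left-to-right into a term rewrite system on top of the BAG rules, obtain strong normalization from Theorem \ref{SN} via an LPO, and then argue by structural induction that every normal form of a closed term is a basic ACG term. The second half of your argument matches the paper's; in fact the paper disposes of the case $G'\equiv G_1\parallel G_2$ even more directly, since clause (6) of Definition \ref{BTACG} makes $G_1\parallel G_2$ basic whenever $G_1,G_2$ are, so no rule application is needed there. (Your claim that one of $RCG2$--$RCG9$ always fires when an argument of $\parallel$ has top symbol $\circ$ or $^d$ is not literally true --- e.g.\ $(g_a\circ G)\parallel(H_1\parallel H_2)$ or $g_a^d\parallel H$ match none of them --- but the clause-(6) fallback covers those configurations anyway, so this is only a cosmetic slip.)

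The genuine gap is in your termination argument: the precedence you propose, $^d>\circ>\parallel>\wedge>\vee$, does \emph{not} orient $CG2$--$CG4$ by LPO. For $g_a\parallel(g_b\circ y)\rightarrow(g_a\parallel g_b)\circ y$ the head symbol of the right-hand side, $\circ$, is strictly greater than the head symbol $\parallel$ of the left-hand side, so an LPO decrease would require some argument of the left side ($g_a$ or $g_b\circ y$) to dominate the entire right side, and this fails ($g_b\circ y$ loses to $(g_a\parallel g_b)\circ y$ already on the first argument). Your sketched justification (``the outermost symbol $\circ$ is greater than $\parallel$ \ldots and the left argument of $\circ$ is strictly smaller'') is not a legitimate LPO step. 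You correctly identify this as the delicate point, but the resolution is the opposite of your choice: put $\parallel$ \emph{above} $\circ$, as the paper does with $^d>\parallel>\circ>\wedge>\vee$. Then $CG2$--$CG4$ orient because the left head $\parallel$ beats the right head $\circ$ and the left side dominates each argument of the right-hand side, while $RG9$ is unaffected since it only needs $\circ>\vee$ and $\circ>\wedge$; there is no real tension between $RG9$ and $RCG2$--$RCG4$. With that corrected precedence your proof coincides with the paper's.
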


\begin{proof}
(1) Firstly, suppose that the following ordering on the signature of ACG is defined: $^d > \parallel>\circ > \wedge >\vee$ and the symbol $^d$ is given the lexicographical status for the first argument, then for each rewrite rule $G\rightarrow H$ in Table \ref{TRSForACG} relation $G>_{lpo} H$ can easily be proved. We obtain that the term rewrite system shown in Table \ref{TRSForACG} is strongly normalizing, for it has finitely many rewriting rules, and $>$ is a well-founded ordering on the signature of ACG, and if $G>_{lpo} H$, for each rewriting rule $G\rightarrow H$ is in Table \ref{TRSForACG}.

\begin{center}
    \begin{table}
        \begin{tabular}{@{}ll@{}}
            \hline
            No. & Rewriting Rule\\
            $RCG1$ & $(x\parallel y)\parallel z \rightarrow x\parallel(y\parallel z)$\\
            $RCG2$ & $g_a\parallel (g_b\circ y) \rightarrow (g_a\parallel g_b)\circ y$\\
            $RCG3$ & $(g_a\circ x)\parallel g_b \rightarrow (g_a\parallel g_b)\circ x$\\
            $RCG4$ & $(g_a\circ x)\parallel (g_b\circ y) \rightarrow (g_a\parallel g_b)\circ (x\parallel y)$\\
            $RCG5$ & $(x\vee y)\parallel z \rightarrow (x\parallel z)\vee (y\parallel z)$\\
            $RCG6$ & $x\parallel (y\vee z) \rightarrow (x\parallel y)\vee (x\parallel z)$\\
            $RCG7$ & $(x\wedge y)\parallel z \rightarrow (x\parallel z)\wedge (y\parallel z)$\\
            $RCG8$ & $x\parallel (y\wedge z) \rightarrow (x\parallel y)\wedge (x\parallel z)$\\
            $RCG9$ & $(x\parallel y)^d\rightarrow x^d\parallel y^d$\\
            $RCG10$ & $\iota\parallel x\rightarrow x$\\
            $RCG11$ & $x\parallel\iota \rightarrow x$\\
        \end{tabular}
        \caption{Term rewrite system of ACG}
        \label{TRSForACG}
    \end{table}
\end{center}

(2) Then we prove that the normal forms of closed ACG terms are basic ACG terms.

Suppose that $G$ is a normal form of some closed ACG term and suppose that $G$ is not a basic term. Let $G'$ denote the smallest sub-term of $G$ which is not a basic term. It implies that each sub-term of $G'$ is a basic term. Then we prove that $G$ is not a term in normal form. It is sufficient to induct on the structure of $G'$:

\begin{itemize}
  \item Case $G'\equiv g_a, g_a\in \mathcal{G}_{at}$. $G'$ is a basic term, which contradicts the assumption that $G'$ is not a basic term, so this case should not occur.
  \item Case $G'\equiv g_a^d, g_a^d\in \mathcal{G}_{at}$. $G'$ is a basic term, which contradicts the assumption that $G'$ is not a basic term, so this case should not occur.
  \item Case $G'\equiv G_1\circ G_2$. By induction on the structure of the basic term $G_1$:
      \begin{itemize}
        \item Subcase $G_1\in \mathcal{G}_{at}$. $G'$ would be a basic term, which contradicts the assumption that $G'$ is not a basic term;
        \item Subcase $G_1\equiv g_a\circ G_1'$. $RG8$ rewriting rule can be applied. So $G$ is not a normal form;
        \item Subcase $G_1\equiv G_1'\wedge G_1''$. $RG9$ rewriting rule can be applied. So $G$ is not a normal form;
        \item Subcase $G_1\equiv G_1'\vee G_1''$. $RG9$ rewriting rule can be applied. So $G$ is not a normal form.
      \end{itemize}
  \item Case $G'\equiv G_1\wedge G_2$. By induction on the structure of the basic terms both $G_1$ and $G_2$, all subcases will lead to that $G'$ would be a basic term, which contradicts the assumption that $G'$ is not a basic term.
  \item Case $G'\equiv G_1\vee G_2$. By induction on the structure of the basic terms both $G_1$ and $G_2$, all subcases will lead to that $G'$ would be a basic term, which contradicts the assumption that $G'$ is not a basic term.
  \item Case $G'\equiv G_1\parallel G_2$. By induction on the structure of the basic terms both $G_1$ and $G_2$, all subcases will lead to that $G'$ would be a basic term, which contradicts the assumption that $G'$ is not a basic term.
\end{itemize}
\end{proof}

\subsection{Structured Operational Semantics of ACG}

In this subsection, we will define a term-deduction system which gives the operational semantics of ACG. We give the operational transition rules for game operation $\parallel$ as Table \ref{TRForACG} shows.

\begin{center}
    \begin{table}
        $$\frac{x\xrightarrow{g_a}\surd\quad y\xrightarrow{g_b}\surd}{x\parallel y\xrightarrow{\{g_a,g_b\}}\surd} \quad\frac{x\xrightarrow{g_a}x'\quad y\xrightarrow{g_b}\surd}{x\parallel y\xrightarrow{\{g_a,g_b\}} x'} \quad\frac{x\rightarrow{g_a}\surd\quad y\xrightarrow{g_b}y'}{x\parallel y\xrightarrow{\{g_a,g_b\}}y'} \quad\frac{x\xrightarrow{g_a}x'\quad y\xrightarrow{g_b} y'}{x\parallel y\xrightarrow{\{g_a,g_b\}} x'\parallel y'}$$
        \caption{Transition rules of ACG}
        \label{TRForACG}
    \end{table}
\end{center}

\begin{theorem}[Generalization of ACG with respect to BAG]
ACG is a generalization of BAG.
\end{theorem}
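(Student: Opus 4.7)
The plan is to invoke the Conservative Extension theorem from Section \ref{lts}. Setting $T_0$ to be the TSS of BAG (Table \ref{TRForBAG}) over the signature $GL_0$ consisting of $\mathcal{G}_{at}$, $\iota$, $^d$, $\vee^d$, and $\circ$, and letting $T_1$ be the TSS given in Table \ref{TRForACG} over the extended signature $GL_1$ which adds the parallel operator $\parallel$, it suffices to check the hypotheses of that theorem and conclude that $T_0 \oplus T_1$ is a conservative extension of $T_0$. Conservative extension is exactly the statement that every BAG transition judgment is preserved and no new ones are introduced on BAG terms, which is the precise meaning of ACG being a generalization of BAG.

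First I would verify that BAG is source-dependent. Inspecting each transition rule in Table \ref{TRForBAG}, the sources are $g_a^d$, $\iota$, $x \vee^d y$, and $x \circ y$; every variable in the conclusion is either already in the source or appears in a premise whose left-hand side has only source-dependent variables (e.g., $x'$ in the rule $\frac{x \xrightarrow{g_a} x'}{x \circ y \xrightarrow{g_a} x' \circ y}$ becomes source-dependent because $x$ is). Second, both $T_0$ and $T_0 \oplus T_1$ are trivially positive after reduction, since no rule in either table carries a negative premise.

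Third, I would verify the freshness condition for every rule $\varrho \in T_1$. Each of the four rules in Table \ref{TRForACG} has a source of the form $x \parallel y$, and $\parallel$ is a function symbol in $GL_1 \setminus GL_0$, so every such source is fresh in the sense of Section \ref{lts}. This is the simplest of the two alternatives offered by the Conservative Extension theorem and applies uniformly to all new rules, so no rule-by-rule analysis of premises is needed.

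I do not anticipate any real obstacle: the argument is a direct verification of the hypotheses of an already stated meta-theorem. The only minor subtlety is making precise what ``generalization'' means here; I would read it as ``conservative extension'', which is the standard interpretation and is exactly what the cited theorem delivers. Once the three checks above are in place, the conclusion follows immediately by that theorem.
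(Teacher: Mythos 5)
Your proposal is correct and follows essentially the same route as the paper: the paper's proof likewise rests on the conservative-extension meta-theorem of Section \ref{lts}, citing source-dependency of the BAG rules and the fact that every ACG rule has $\parallel$ in its source (i.e., a fresh source). You are in fact slightly more careful than the paper, explicitly checking the ``positive after reduction'' hypothesis and spelling out that ``generalization'' is read as conservative extension; the paper additionally notes source-dependency of the ACG rules, which is not needed for this particular theorem.
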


\begin{proof}
It follows from the following three facts.

\begin{enumerate}
  \item The transition rules of BAG in section \ref{bag} are all source-dependent;
  \item The sources of the transition rules ACG contain an occurrence of $\parallel$;
  \item The transition rules of ACG are all source-dependent.
\end{enumerate}

So, ACG is a generalization of BAG, that is, BAG is an embedding of ACG, as desired.
\end{proof}

\begin{theorem}[Congruence of ACG with respect to game equivalence]
Game equivalence $\sim$ is a congruence with respect to ACG.
\end{theorem}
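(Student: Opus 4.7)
The plan is to reduce the problem to the single new case of the parallel operator $\parallel$, since $^d$, $\vee$, $\wedge$ and $\circ$ have already been shown to be congruences with respect to $\sim$ in the corresponding theorem for BAG. The preceding generalization theorem guarantees that the transition rules for those four operators behave identically in ACG, so their inductive clauses for the outcome relations $\rho^i_G$ are unchanged and the earlier argument applies verbatim. It therefore remains to verify that $G_1\sim G_2$ and $H_1\sim H_2$ entail $G_1\parallel H_1\sim G_2\parallel H_2$.

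For this I would argue directly from the four transition rules for $\parallel$ in Table~\ref{TRForACG}. Every outgoing transition of $G_1\parallel H_1$ is labelled $\{g_a,g_b\}$ and must decompose into a premise transition from $G_1$ (either to $\surd$ or to some residual $G_1'$) paired with one from $H_1$ (either to $\surd$ or to some residual $H_1'$). The hypotheses $G_1\sim G_2$ and $H_1\sim H_2$, interpreted through Definition~\ref{dge}, supply matching premise transitions of $G_2$ and $H_2$ whose residuals are pairwise $\sim$-equivalent to the originals. Reassembling through the same transition rule yields a matching outgoing transition from $G_2\parallel H_2$, and a symmetric argument handles the converse direction.

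The delicate step is the fourth transition rule, where the target state $x'\parallel y'$ is itself a parallel composition; deducing the equivalence of the residuals asks for exactly the statement being proved. The standard way to close this gap is to introduce the auxiliary relation
$$R=\{(U\parallel V,\; U'\parallel V')\,:\, U\sim U',\ V\sim V'\}$$
and verify that $R$ is preserved by the transition rules in the sense required by Definition~\ref{dge}, yielding $R\subseteq\sim$ by a coinductive argument (or, equivalently, by induction on the sum of sizes of the two terms combined with an induction on derivation depth). Combining this with the inherited congruence for the BAG operators then gives congruence of $\sim$ with respect to every operator of ACG, as desired.
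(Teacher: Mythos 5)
Your proposal follows essentially the same route as the paper: reduce to the single new operator $\parallel$ and verify preservation of $\sim$ directly from Definition~\ref{dge} and the transition rules in Table~\ref{TRForACG}, which is all the paper's (very terse) proof does. Your extra step of closing the fourth rule via the auxiliary relation $R$ and a coinductive (or size-based) argument is a sound elaboration of a point the paper simply asserts as immediate, so there is no divergence in approach, only in level of detail.
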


\begin{proof}
It is sufficient to prove that game equivalence is preserved by the game operation $\parallel$.

Suppose that $G_1\sim G_2$ and $H_1\sim H_2$, it suffices to prove that $G_1\parallel H_1\sim G_2\parallel H_2$. It can be immediately gotten from the definition of game equivalence (Definition \ref{dge}) and transition rules of $\parallel$ in Table \ref{TRForACG}.
\end{proof}

\begin{theorem}[Soundness of ACG modulo game equivalence]
Let $x$ and $y$ be ACG terms. If $ACG\vdash x=y$, then $x\sim y$.
\end{theorem}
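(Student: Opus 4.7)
The plan is to reduce soundness of ACG to the soundness of BAG plus a case analysis on the eleven new axioms $CG1$--$CG11$ governing $\parallel$. Since $\sim$ is an equivalence relation (Definition~\ref{dge}) and has just been shown to be a congruence with respect to every ACG operator, it extends to a congruence on ACG-derivations. Hence, arguing by induction on the length of a derivation of $ACG \vdash x = y$, the non-trivial step is to verify that each axiom in Table~\ref{AxiomsForACG} yields game-equivalent sides when read as closed instances.

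For each axiom I would read off the outcome relations of the left- and right-hand sides using the transition rules in Table~\ref{TRForACG} together with those of Table~\ref{TRForBAG}. For $CG1$ (associativity of $\parallel$) one observes that a move from $(x \parallel y) \parallel z$ and from $x \parallel (y \parallel z)$ both consist in firing one transition from each of $x$, $y$, $z$, producing the same label set $\{g_a,g_b,g_c\}$ and equivalent residuals up to reassociation, so the induced $\rho^i$ agree on both sides. Axioms $CG2$--$CG4$ are verified by direct inspection: a single step of $g_a \parallel (g_b \circ y)$ must involve the atomic move of $g_a$ and the first move of $g_b \circ y$, hence produces label $\{g_a,g_b\}$ and residual $y$, which is exactly the behaviour of $(g_a \parallel g_b)\circ y$. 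Axioms $CG5$--$CG8$ (distributivity of $\parallel$ over $\vee$ and $\wedge$) follow because the choice operators only determine which player's outgoing transition is selected at the top level, and the transition rules of $\parallel$ commute with this choice. Axiom $CG9$ is immediate from the fact that dualization swaps the roles of the two players uniformly, which commutes with the symmetric definition of $\parallel$, while $CG10$ and $CG11$ follow from $\iota \to \surd$ and the fact that a parallel step with $\iota$ contributes the empty label and then terminates.

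The main obstacle will be axiom $CG9$, $(x \parallel y)^d = x^d \parallel y^d$: the transition rules in Table~\ref{TRForACG} use the unlabeled arrow $\xrightarrow{\{g_a,g_b\}}$ without explicitly tracking the player index $d$, so one has to interpret dualization through the outcome relation definition rather than through a direct transition-rule correspondence. I would argue by showing that $\rho^1_{(x \parallel y)^d} = \rho^2_{x \parallel y}$, then expand the right-hand side using the inductive definition of $\rho^i_{x \parallel y}$ in terms of $\rho^i_x$ and $\rho^i_y$, and match it with $\rho^1_{x^d \parallel y^d}$ via $\rho^1_{x^d} = \rho^2_x$ and $\rho^1_{y^d} = \rho^2_y$. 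The remaining axioms reduce to routine verification and, as in the BAG soundness proof, can be left to the reader once the pattern is established.
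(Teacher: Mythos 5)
Your proposal follows the same route as the paper: appeal to the fact that $\sim$ is an equivalence and a congruence for the ACG operators so that soundness reduces to checking each axiom of Table~\ref{AxiomsForACG} against the transition rules, which is exactly the paper's argument (the paper then leaves the individual axiom checks as an exercise, whereas you sketch them, including the mildly delicate $CG9$ case). No gap; if anything your version is more explicit than the paper's.
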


\begin{proof}
Since game equivalence is both an equivalent and a congruent relation, we only need to check if each axiom in Table \ref{AxiomsForACG} is sound modulo game equivalence, according to the definition of game equivalence (Definition \ref{dge}) and transition rules in Table \ref{TRForACG}. The checks are left to the readers as an exercise.
\end{proof}

\begin{theorem}[Completeness of ACG modulo game equivalence]
Let $G$ and $H$ be closed ACG terms, if $G\sim H$ then $G=H$.
\end{theorem}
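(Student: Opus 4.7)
The plan is to follow the BAG completeness proof closely, using the ACG elimination theorem (Theorem~\ref{ETACG}) to reduce the problem to closed basic ACG terms. First I would observe that if $G \sim H$, then by elimination there exist closed basic ACG terms $N,N'$ with $ACG\vdash G=N$ and $ACG\vdash H=N'$; by soundness $G\sim N$ and $H\sim N'$, so $N\sim N'$, and it suffices to show that $N=_{AC} N'$, where $=_{AC}$ denotes equality modulo associativity and commutativity of $\vee^i$.

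Next I would put each basic ACG term into a normal form modulo AC of $\vee^i$, writing it as $G_1 \vee^i \cdots \vee^i G_k$ where each summand $G_j$ has one of the following shapes dictated by Definition~\ref{BTACG} together with the rewriting rules of Table~\ref{TRSForACG}: an atomic game $g_a$, an atomic dual $g_a^d$, a parallel composition of atomics $g_{a_1}\parallel\cdots\parallel g_{a_m}$, or a composition $H_1\circ H_2$ whose leading factor $H_1$ is itself one of the previous three shapes. The rewriting rules $RCG2$--$RCG8$ guarantee that $\parallel$ is always pushed inward past $\circ$, $\vee$ and $\wedge$ and so only survives between atomic games in a normal form.

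The core step is an induction on the combined size of $N$ and $N'$ to show that $N\sim N'$ implies $N=_{AC}N'$. I would take an arbitrary summand $G_j$ of $N$ and, using the transition rules of Tables~\ref{TRForBAG} and~\ref{TRForACG}, produce a transition $N\xrightarrow{\ell}\surd$ or $N\xrightarrow{\ell} K$ (where $\ell$ is either a single atomic game or a parallel set label $\{g_{a_1},\dots,g_{a_m}\}$). Since $N\sim N'$, a matching transition exists from $N'$, forcing a matching summand. For summands of composition shape $(g_{a_1}\parallel\cdots\parallel g_{a_m})\circ H_2$, a matching summand $(g_{a_1}\parallel\cdots\parallel g_{a_m})\circ H_2'$ exists in $N'$ with $H_2\sim H_2'$, and the induction hypothesis yields $H_2=_{AC}H_2'$. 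By symmetry each summand of $N'$ also appears in $N$, so $N=_{AC}N'$.

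The main obstacle I expect is the treatment of summands involving $\parallel$. Care is needed to argue that in normal form every $\parallel$ sits directly between atomic games and that the set-labeled transition $\xrightarrow{\{g_{a_1},\dots,g_{a_m}\}}$ uniquely determines the parallel summand (up to associativity of $\parallel$, which is itself an axiom $CG1$). Once this structural fact is clearly established, the inductive matching of summands proceeds exactly as in the BAG completeness proof, and the final chain $G\sim N\sim N'\sim H$ combined with $N=_{AC}N'$ gives $ACG\vdash G=H$ as desired.
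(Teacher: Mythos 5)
Your proposal follows essentially the same route as the paper's own proof: reduce to closed basic ACG terms via the elimination theorem, pass to normal forms modulo AC of $\vee^i$ in which $\parallel$ occurs only between atomic games, and match summands of $N$ and $N'$ by induction on size using the (set-labeled) transitions, concluding $G=N=_{AC}N'=H$. Your extra attention to pure parallel summands and dual atomics is a slight refinement of the paper's case analysis but not a different method.
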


\begin{proof}
Firstly, by the elimination theorem of ACG, we know that for each closed ACG term $G$, there exists a closed basic ACG term $G'$, such that $ACG\vdash G=G'$, so, we only need to consider closed basic ACG terms.

The basic terms (see Definition \ref{BTACG}) modulo associativity and commutativity (AC) of $\vee^i$ (defined by axiom $RG2$ in Table \ref{AxiomsForBAG}), and this equivalence is denoted by $=_{AC}$. Then, each equivalence class $G$ modulo AC of $\vee^i$ has the following normal form

$$G_1\vee^i\cdots\vee^i G_k$$

with each $G_i$ either an atomic game or of the form

$$H_1\circ \cdots \circ H_m$$

with each $H_j$ either an atomic game or of the form

$$U_1\parallel\cdots\parallel U_n$$

with each $U_l$ an atomic game, and each $G_i$ is called the summand of $G$.

Now, we prove that for normal forms $N$ and $N'$, if $N\sim N'$ then $N=_{AC}N'$. It is sufficient to induct on the sizes of $N$ and $N'$.

\begin{itemize}
  \item Consider a summand $g_a$ of $N$. Then $N\xrightarrow{g_a}\surd$, so $N\sim N'$ implies $N'\xrightarrow{g_a}\surd$, meaning that $N'$ also contains the summand $g_a$.
  \item Consider a summand $H_1\circ H_2$ of $N$.
  \begin{itemize}
    \item if $H_1\equiv g_a'$, then $N\xrightarrow{g_a'}H_2$, so $N\sim N'$ implies $N'\xrightarrow{g_a'}H_2'$ with $H_2\sim H_2'$, meaning that $N'$ contains a summand $g_a'\circ H_2'$. Since $H_2$ and $H_2'$ are normal forms and have sizes smaller than $N$ and $N'$, by the induction hypotheses if $H_2\sim H_2'$ then $H_2=_{AC} H_2'$;
    \item if $H_1\equiv g_{a_1}\parallel\cdots\parallel g_{a_n}$, then $N\xrightarrow{\{g_{a_1},\cdots,g_{a_n}\}}H_2$, so $N\sim N'$ implies $N'\xrightarrow{\{g_{a_1},\cdots,g_{a_n}\}}H_2'$ with $H_2\sim H_2'$, meaning that $N'$ contains a summand $(g_{a_1}\parallel\cdots\parallel g_{a_n})\circ H_2'$. Since $H_2$ and $H_2'$ are normal forms and have sizes smaller than $N$ and $N'$, by the induction hypotheses if $H_2\sim H_2'$ then $H_2=_{AC} H_2'$.
  \end{itemize}
\end{itemize}

So, we get $N=_{AC} N'$.

Finally, let $G$ and $H$ be basic terms, and $G\sim H$, there are normal forms $N$ and $N'$, such that $G=N$ and $H=N'$. The soundness theorem of ACG modulo game equivalence yields $G\sim N$ and $H\sim N'$, so $N\sim G\sim H\sim N'$. Since if $N\sim N'$ then $N=_{AC}N'$, $G=N=_{AC}N'=H$, as desired.
\end{proof}

\section{Recursion}\label{rec}

In this section, we introduce recursion to capture infinite games based on ACG. We do not consider the idle game $\iota$ in this section, the full consideration of $\iota$ is placed into the next section (Section \ref{abs}).

In the following, $E,F,G$ are recursion specifications, $X,Y,Z$ are recursive variables.

\subsection{Guarded Recursive Specifications}

\begin{definition}[Recursive specification]
A recursive specification is a finite set of recursive equations

$$X_1=G_1(X_1,\cdots,X_n)$$
$$\cdots$$
$$X_n=G_n(X_1,\cdots,X_n)$$

where the left-hand sides of $X_i$ are called recursion variables, and the right-hand sides $G_i(X_1,\cdots,X_n)$ are game terms in ACG with possible occurrences of the recursion variables $X_1,\cdots,X_n$.
\end{definition}

\begin{definition}[Solution]
Games $g_1,\cdots,g_n$ are a solution for a recursive specification $\{X_i=G_i(X_1,\cdots,X_n)|i\in\{1,\cdots,n\}\}$ (with respect to game equivalence $\sim$ if $g_i\sim G_i(g_1,\cdots,g_n)$ for $i\in\{1,\cdots,n\}$.
\end{definition}

\begin{definition}[Guarded recursive specification]
A recursive specification

$$X_1=G_1(X_1,\cdots,X_n)$$
$$...$$
$$X_n=G_n(X_1,\cdots,X_n)$$

is guarded if the right-hand sides of its recursive equations can be adapted to the form by applications of the axioms in ACG and replacing recursion variables by the right-hand sides of their recursive equations,

$$(g_{11}\parallel\cdots\parallel g_{1i_1})\circ G_1(X_1,\cdots,X_n)\vee^i\cdots\vee^i(g_{k1}\parallel\cdots\parallel g_{ki_k})\circ G_k(X_1,\cdots,X_n)\vee^i(h_{11}\parallel\cdots\parallel h_{1j_1})\vee^i\cdots\vee^i(h_{1j_1}\parallel\cdots\parallel h_{lj_l})$$

where $g_{11},\cdots,g_{1i_1},g_{k1},\cdots,g_{ki_k},h_{11},\cdots,h_{1j_1},h_{1j_1},\cdots,h_{lj_l}\in \mathcal{G}_{at}$.
\end{definition}

\begin{definition}[Linear recursive specification]\label{LRS}
A recursive specification is linear if its recursive equations are of the form

$$(a_{11}\parallel\cdots\parallel a_{1i_1})X_1\vee^i\cdots\vee^i(a_{k1}\parallel\cdots\parallel a_{ki_k})X_k\vee^i(b_{11}\parallel\cdots\parallel b_{1j_1})\vee^i\cdots\vee^i(b_{1j_1}\parallel\cdots\parallel b_{lj_l})$$

where $a_{11},\cdots,a_{1i_1},a_{k1},\cdots,a_{ki_k},b_{11},\cdots,b_{1j_1},b_{1j_1},\cdots,b_{lj_l}\in \mathcal{G}_{at}$.
\end{definition}

For a guarded recursive specifications $E$ with the form

$$X_1=G_1(X_1,\cdots,X_n)$$
$$\cdots$$
$$X_n=G_n(X_1,\cdots,X_n)$$

the behavior of the solution $\langle X_i|E\rangle$ for the recursion variable $X_i$ in $E$, where $i\in\{1,\cdots,n\}$, is exactly the behavior of their right-hand sides $G_i(X_1,\cdots,X_n)$, which is captured by the two transition rules in Table \ref{TRForGR}.

\begin{center}
    \begin{table}
        $$\frac{G_i(\langle X_1|E\rangle,\cdots,\langle X_n|E\rangle)\xrightarrow{\{g_1,\cdots,g_n\}}\surd}{\langle X_i|E\rangle\xrightarrow{\{g_1,\cdots,g_n\}}\surd}$$
        $$\frac{G_i(\langle X_1|E\rangle,\cdots,\langle X_n|E\rangle)\xrightarrow{\{g_1,\cdots,g_n\}} y}{\langle X_i|E\rangle\xrightarrow{\{g_1,\cdots,g_n\}} y}$$
        \caption{Transition rules of guarded recursion}
        \label{TRForGR}
    \end{table}
\end{center}

\begin{theorem}[Conservitivity of ACG with guarded recursion]
ACG with guarded recursion is a conservative extension of ACG.
\end{theorem}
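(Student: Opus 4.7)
The plan is to apply the general Conservative Extension theorem of Section~\ref{lts} with $T_0$ the TSS of ACG (the union of the rules in Tables \ref{TRForBAG} and \ref{TRForACG}) and $T_1$ the two transition rules for guarded recursion in Table \ref{TRForGR}, under the signature extension that adds, for each guarded recursive specification $E$ and each recursion variable $X_i$ of $E$, a fresh constant symbol $\langle X_i\mid E\rangle$.

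First I would verify that $T_0$ is source-dependent. Every rule in Tables \ref{TRForBAG} and \ref{TRForACG} has a source built from the variables $x$ and $y$, and every target-side variable ($x'$, $y'$) is tied to a source variable through a premise whose left-hand side is already in the source; so the inductive clauses of the source-dependency definition close up at once. Adding the rules of $T_1$ introduces only the new source schema $\langle X_i\mid E\rangle$, whose indices $X_i,E$ are packaged into the fresh constant, so source-dependency of $T_0\oplus T_1$ follows as well.

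Second I would check each rule of $T_1$ against condition~(2) of the theorem. The conclusion of each such rule has source $\langle X_i\mid E\rangle$, and since $\langle\,\cdot\mid\cdot\,\rangle\in GL_1\setminus GL_0$, this source is fresh in the sense of the Freshness definition. The first disjunct of condition~(2) is therefore satisfied for every rule of $T_1$, and the premise-side alternative need not be examined. Finally, both $T_0$ and $T_0\oplus T_1$ consist solely of positive premises, so both are trivially positive after reduction. Combining these three facts with the Conservative Extension theorem yields directly that ACG with guarded recursion is a conservative extension of ACG.

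The main potential obstacle is the signature-extension step: one must commit to treating $\langle X_i\mid E\rangle$ as an honestly fresh constant whose operational behavior is defined \emph{solely} by Table \ref{TRForGR}, and not as a macro expanding to a pure ACG term. With that reading, freshness of the source is immediate and the argument reduces to a one-line citation. Under any other reading one would be forced onto the premise-side disjunct of condition~(2), where the premise $G_i(\langle X_1\mid E\rangle,\cdots,\langle X_n\mid E\rangle)\xrightarrow{\{g_1,\cdots,g_n\}}\surd$ is not obviously of the prescribed form because its left-hand side is not itself a pure ACG term, and the argument would become genuinely harder.
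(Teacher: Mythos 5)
Your proposal is correct and follows essentially the same route as the paper: the paper's proof likewise cites the general conservative-extension theorem, using source-dependency of the ACG rules and the fact that the recursion rules of Table~\ref{TRForGR} have only the fresh constants $\langle X_i|E\rangle$ in their sources. Your additional remarks on positivity and on reading $\langle X_i|E\rangle$ as a genuinely fresh constant (rather than a macro) simply make explicit what the paper leaves implicit.
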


\begin{proof}
Since the transition rules of ACG are source-dependent, and the transition rules for guarded recursion in Table \ref{TRForGR} contain only a fresh constant in their source, so the transition rules of ACG with guarded recursion are a conservative extension of those of ACG.
\end{proof}

\begin{theorem}[Congruence theorem of ACG with guarded recursion]
Game equivalence $\sim$ is a congruence with respect to ACG with guarded recursion.
\end{theorem}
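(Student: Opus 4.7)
The plan is to reduce the claim to the already-established congruence of ACG. Since the recursion constants $\langle X_i|E\rangle$ are zero-ary symbols, the only substantive verification is that their interaction with the existing ACG operators still yields a congruence. Game equivalence is already a congruence for the operators $^d,\vee,\wedge,\circ,\parallel$ by the congruence theorem of ACG, so an induction on the structure of any enclosing context, mirroring the case analyses of the earlier congruence proofs, will close the routine cases: each ACG operator has been verified individually, and the rules in Table \ref{TRForACG} together with those of BAG are used exactly as before.

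The substantive step is to show that whenever $\langle X_i|E\rangle \sim \langle Y_j|F\rangle$, the two constants exhibit matching outgoing transitions with game-equivalent targets. Here I would appeal directly to Table \ref{TRForGR}: by construction, $\langle X_i|E\rangle$ has exactly the same outgoing transitions as its unfolding $G_i(\langle X_1|E\rangle,\ldots,\langle X_n|E\rangle)$, which is a plain ACG term. Consequently, an equivalence between recursion constants is witnessed by an equivalence between their unfoldings, after which the ACG congruence theorem takes over and the argument continues inductively on the transition trees. Combined with the conservativity result just proved, this ensures that no new transitions are introduced for ACG terms and that the equivalence $\sim$ on the extended TSS restricts correctly to ACG.

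The main obstacle I anticipate is making the induction well-founded, since unfolding a recursion variable can expose further recursion variables. This is resolved by exploiting guardedness: in the normal form exhibited in the definition of guarded recursive specification, each unfolding begins with a nonempty parallel prefix of atomic games before any recursion variable is reached, so a coinductive definition of the bisimulation relation on the labelled transition system generated by the combined TSS produces the required congruence without circularity. As in the earlier congruence proofs of the paper, once this setup is in place the detailed case checks for $^d$, $\vee$, $\wedge$, $\circ$ and $\parallel$ are routine and can be left to the reader.
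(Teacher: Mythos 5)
Your proposal is correct and follows essentially the same route as the paper: the paper's own (very brief) argument rests on exactly the two facts you exploit, namely that guarded recursion constants behave as their unfoldings into ACG terms (via the rules of Table \ref{TRForGR} and the adaptability of right-hand sides under the ACG axioms) and that game equivalence is already a congruence for all ACG operators. Your additional remarks on guardedness ensuring well-foundedness of the unfolding argument merely make explicit what the paper leaves implicit.
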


\begin{proof}
It follows the following two facts:
\begin{enumerate}
  \item in a guarded recursive specification, right-hand sides of its recursive equations can be adapted to the form by applications of the axioms in ACG and replacing recursion variables by the right-hand sides of their recursive equations;
  \item game equivalence $\sim$ is a congruences with respect to all game operations of ACG.
\end{enumerate}
\end{proof}

\subsection{Recursive Definition and Specification Principles}

The $RDP$ (Recursive Definition Principle) and the $RSP$ (Recursive Specification Principle) are shown in Table \ref{RDPRSP}.

\begin{center}
\begin{table}
  \begin{tabular}{@{}ll@{}}
\hline No. &Axiom\\
  $RDP$ & $\langle X_i|E\rangle = G_i(\langle X_1|E,\cdots,X_n|E\rangle)\quad (i\in\{1,\cdots,n\})$\\
  $RSP$ & if $y_i=G_i(y_1,\cdots,y_n)$ for $i\in\{1,\cdots,n\}$, then $y_i=\langle X_i|E\rangle \quad(i\in\{1,\cdots,n\})$\\
\end{tabular}
\caption{Recursive definition and specification principle}
\label{RDPRSP}
\end{table}
\end{center}

\begin{theorem}[Elimination theorem of ACG with linear recursion]\label{ETRecursion}
Each game term in ACG with linear recursion is equal to a game term $\langle X_1|E\rangle$ with $E$ a linear recursive specification.
\end{theorem}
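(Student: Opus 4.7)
The plan is to proceed by structural induction on game terms in ACG with linear recursion, showing that the class of terms equivalent (provably equal) to some $\langle X_1|E\rangle$ with $E$ linear is closed under all term constructors. The base cases are: an atomic game $g_a$ is equal to $\langle X_1|E\rangle$ where $E$ consists of the single linear equation $X_1 = g_a$ (a pure terminating summand in the sense of Definition \ref{LRS}); and a term that is already $\langle X_i|E\rangle$ with $E$ linear can be rewritten, using $RDP$ and a renaming of the equations, into one with start variable $X_1$.

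For the inductive step I assume each operand is already equal to some $\langle X_1|E\rangle$ with $E = \{X_i = G_i\}_{i\le m}$ linear, and likewise for $\langle Y_1|F\rangle$ with $F = \{Y_j = H_j\}_{j\le n}$. For $\vee$ and $\wedge$, take $E\uplus F$ together with one new equation whose right-hand side is the $\vee$-sum (resp.\ $\wedge$-combination, then flattened back to $\vee$-normal form using $G5$) of the right-hand sides $G_1$ and $H_1$; the result is again in the shape required by Definition \ref{LRS}. For $\circ$, replace in $E$ every purely terminating summand $(b_{11}\parallel\cdots\parallel b_{1j_l})$ of each $G_i$ by the guarded summand $(b_{11}\parallel\cdots\parallel b_{1j_l})\,Y_1$, then take the disjoint union with $F$; correctness follows from $CG$-distribution axioms and $RSP$. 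For $^d$, push the dualization inward using $G7$, $G10$ and $CG9$ until it reaches the atomic games, then replace each such $g_a$ by $g_a^d$; since the linear form is closed under renaming atomic labels, linearity is preserved.

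The main obstacle is the case of $\parallel$. For $\langle X_1|E\rangle \parallel \langle Y_1|F\rangle$, I introduce a fresh recursion variable $Z_{ij}$ for each pair $(i,j)\in\{1,\dots,m\}\times\{1,\dots,n\}$, intending $Z_{ij}$ to denote $\langle X_i|E\rangle \parallel \langle Y_j|F\rangle$. The equation for $Z_{ij}$ is read off the transition rules in Table \ref{TRForACG}: for every guarded summand $(a_{11}\parallel\cdots\parallel a_{1i_1})X_k$ of $G_i$ and every guarded summand $(c_{11}\parallel\cdots\parallel c_{1j_1})Y_l$ of $H_j$ one obtains a summand $(a_{11}\parallel\cdots\parallel c_{11}\parallel\cdots)Z_{kl}$; pairing a guarded summand with a terminating summand yields a guarded summand of the form $(\cdots)X_k$ or $(\cdots)Y_l$; and pairing two terminating summands yields a terminating summand. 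Since $E$ and $F$ are finite and linear, this yields a finite linear specification of $mn$ equations, and $RSP$ identifies $\langle Z_{11}|E\oplus F\oplus\{Z_{ij}=\cdots\}\rangle$ with $\langle X_1|E\rangle \parallel \langle Y_1|F\rangle$.

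The routine bookkeeping is the verification that after each construction the right-hand sides can indeed be brought, by the $CG$ distribution axioms together with $G5$ and $G9$, into the normal form demanded by Definition \ref{LRS}; this is where $^d$ and $\parallel$ interact least transparently and where I expect the proof to be most mechanical, but no genuinely new idea beyond the product construction above is needed.
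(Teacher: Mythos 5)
Your proposal is correct and follows essentially the same route as the paper: structural induction to exhibit a finite system of equations in the linear form of Definition~\ref{LRS} that the term (together with its derivatives) satisfies, and then an appeal to $RSP$ (via $RDP$) to identify the term with $\langle X_1|E\rangle$. The only difference is one of detail: you spell out the per-operator constructions (merging specifications for $\vee$, $\wedge$, $\circ$, the relabelled specification for $^d$, and the product construction $Z_{ij}$ for $\parallel$) that the paper's terse proof subsumes under ``structural induction with respect to term size.''
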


\begin{proof}
By applying structural induction with respect to term size, each game term $G_1$ in ACG with linear recursion generates a game can be expressed in the form of equations

$$G_i=(g_{i11}\parallel\cdots\parallel g_{i1i_1})G_{i1}\vee^i\cdots\vee^i(g_{ik_i1}\parallel\cdots\parallel g_{ik_ii_k})G_{ik_i}\vee^i(h_{i11}\parallel\cdots\parallel h_{i1i_1})\vee^i\cdots\vee^i(h_{il_i1}\parallel\cdots\parallel h_{il_ii_l})$$

for $i\in\{1,\cdots,n\}$. Let the linear recursive specification $E$ consist of the recursive equations

$$X_i=(g_{i11}\parallel\cdots\parallel g_{i1i_1})X_{i1}\vee^i\cdots\vee^i(g_{ik_i1}\parallel\cdots\parallel g_{ik_ii_k})X_{ik_i}\vee^i(h_{i11}\parallel\cdots\parallel h_{i1i_1})\vee^i\cdots\vee^i(h_{il_i1}\parallel\cdots\parallel h_{il_ii_l})$$

for $i\in\{1,\cdots,n\}$. Replacing $X_i$ by $G_i$ for $i\in\{1,\cdots,n\}$ is a solution for $E$, $RSP$ yields $G_1=\langle X_1|E\rangle$.
\end{proof}

\begin{theorem}[Soundness of ACG with guarded recursion]\label{SACGR}
Let $x$ and $y$ be ACG with guarded recursion terms. If $ACG\textrm{ with guarded recursion}\vdash x=y$, then $x\sim y$;
\end{theorem}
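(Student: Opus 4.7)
The plan is to reduce the claim to verifying each axiom separately. Since $\sim$ is a congruence with respect to ACG with guarded recursion (by the preceding congruence theorem), and equational derivation is generated from reflexivity, symmetry, transitivity, and congruence, it suffices to show that every axiom instance $x = y$ satisfies $x \sim y$. The axioms of ACG were already handled in the soundness theorem for ACG, so only the two new principles $RDP$ and $RSP$ of Table \ref{RDPRSP} require fresh attention.

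For $RDP$, the check is immediate from Table \ref{TRForGR}: a transition $\langle X_i|E\rangle \xrightarrow{\{g_1,\cdots,g_n\}} z$, or a termination transition $\langle X_i|E\rangle \xrightarrow{\{g_1,\cdots,g_n\}} \surd$, is derivable precisely when the analogous transition of $G_i(\langle X_1|E\rangle,\cdots,\langle X_n|E\rangle)$ is derivable. Hence the two terms induce identical outcome relations on every game board and are game equivalent.

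For $RSP$, the substantive case, assume $y_1,\cdots,y_n$ satisfy $y_i \sim G_i(y_1,\cdots,y_n)$ for every $i$; the goal is $y_i \sim \langle X_i|E\rangle$. The strategy is to use guardedness to rewrite each $G_i$ into the displayed guarded form, so that every outgoing transition of $y_i$ has shape $y_i \xrightarrow{\{g_{k1},\cdots,g_{ki_k}\}} y_i'$ with residual $y_i'$ either $\surd$ or a term built from $y_1,\cdots,y_n$. The same structural decomposition is forced on $\langle X_i|E\rangle$ via $RDP$ combined with Table \ref{TRForGR}, producing the syntactically analogous residual built from the $\langle X_j|E\rangle$. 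Matching transitions summand by summand exhibits an equivalence between the two families of solutions, which is then shown to be preserved along all computations by induction on transition depth, using that each guard strictly decreases this measure.

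The main obstacle is $RSP$. Guardedness is indispensable here, since otherwise the trivial equation $X = X$ would admit every term as a solution and $RSP$ would be unsound. Technically the delicate point is that the two solutions are being compared only after arbitrarily many unfoldings, so one needs a well-founded measure strictly decreasing whenever a guard is consumed, together with enough of the equational machinery of ACG to bring each $G_i$ into the guarded normal form before the summand-wise matching step is meaningful.
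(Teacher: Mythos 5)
Your proposal follows essentially the same route as the paper: since $\sim$ is an equivalence and a congruence with respect to ACG with guarded recursion, soundness reduces to checking the axioms of Table \ref{RDPRSP} ($RDP$ and $RSP$) modulo game equivalence, the ACG axioms having been handled already. The paper stops there and leaves those checks as an exercise, whereas you additionally sketch them (the immediate $RDP$ check via Table \ref{TRForGR}, and the guardedness-based matching argument for $RSP$), so your attempt is, if anything, more detailed than the paper's own proof.
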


\begin{proof}
Since game equivalence $\sim$ is both an equivalent and a congruent relation with respect to ACG with guarded recursion, we only need to check if each axiom in Table \ref{RDPRSP} is sound modulo game equivalence. We leave this proof as an exercise for the readers.
\end{proof}

\begin{theorem}[Completeness of ACG with linear recursion]\label{CACGR}
Let $G$ and $H$ be closed ACG with linear recursion terms, then if $G\sim H$ then $G=H$.
\end{theorem}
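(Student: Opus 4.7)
The plan is to reduce completeness for linear recursion to an application of $RSP$, via a standard ``pairing'' construction. By the Elimination Theorem \ref{ETRecursion}, we may assume $G = \langle X_1 \mid E\rangle$ and $H = \langle Y_1 \mid F\rangle$, where $E$ and $F$ are linear recursive specifications with recursion variables $X_1,\ldots,X_m$ and $Y_1,\ldots,Y_n$ respectively. The idea is to build a single linear recursive specification $E \oplus F$ such that both $G$ and $H$ satisfy it as solutions; $RSP$ then forces $G = H$.

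The specification $E \oplus F$ has one recursion variable $Z_{X_i,Y_j}$ for each pair $(X_i,Y_j)$ with $\langle X_i \mid E\rangle \sim \langle Y_j \mid F\rangle$. For each such pair I would list the recursive equation of $X_i$ in $E$ in its linear normal form
\[
(a_{11}\parallel\cdots\parallel a_{1i_1})X_{p_1}\vee^i\cdots\vee^i(a_{k1}\parallel\cdots\parallel a_{ki_k})X_{p_k}\vee^i(b_{11}\parallel\cdots\parallel b_{1j_1})\vee^i\cdots\vee^i(b_{l1}\parallel\cdots\parallel b_{lj_l}),
\]
and likewise for $Y_j$ in $F$. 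Using the completeness result for ACG without recursion together with the operational characterization of $\sim$, each ``action summand'' $(a_{r1}\parallel\cdots\parallel a_{ri_r})X_{p_r}$ of the $X_i$-equation must be matched by a summand $(a_{r1}\parallel\cdots\parallel a_{ri_r})Y_{q_r}$ on the $Y_j$-side with $\langle X_{p_r}\mid E\rangle \sim \langle Y_{q_r}\mid F\rangle$, and similarly each terminating summand $(b_{s1}\parallel\cdots\parallel b_{sj_s})$ must appear on both sides (after quotienting by $=_{AC}$ of $\vee^i$, applied separately to the two players' choice operators). The equation for $Z_{X_i,Y_j}$ then lists exactly the shared action summands with recursion variables $Z_{X_{p_r},Y_{q_r}}$, together with the shared terminating summands.

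With $E \oplus F$ in hand, I would verify three things: (i) $E \oplus F$ is itself linear, which is immediate from the construction; (ii) substituting $\langle X_i\mid E\rangle$ for each $Z_{X_i,Y_j}$ yields a solution of $E\oplus F$, and likewise with $\langle Y_j\mid F\rangle$ substituted---both facts follow from $RDP$ together with the matching argument above; hence $RSP$ gives $\langle X_1\mid E\rangle = \langle Z_{X_1,Y_1}\mid E\oplus F\rangle = \langle Y_1\mid F\rangle$, which is $G = H$.

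The main obstacle will be step (ii): justifying the matching of summands between the $X_i$-equation and the $Y_j$-equation. This requires establishing that if two linear game terms in the normal form of Definition \ref{LRS} are game-equivalent, then up to $=_{AC}$ of $\vee^1$ and $\vee^2$ their summand sets coincide, in the sense that every action summand on one side is matched by one carrying the same multiset $\{a_{r1},\ldots,a_{ri_r}\}$ and an equivalent continuation on the other. This is the analogue, for equivalence classes of recursive variables, of the inductive matching argument used in the ACG completeness proof, but here the induction is no longer on term size; instead one argues directly from the transition rules of Table \ref{TRForACG} and Table \ref{TRForGR} that transitions of $\langle X_i\mid E\rangle$ are in bijective correspondence (up to $\sim$ on targets) with the listed summands, and similarly for $\langle Y_j\mid F\rangle$. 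Once this correspondence is set up, the remainder of the proof is routine bookkeeping.
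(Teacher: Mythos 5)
Your proposal follows essentially the same route as the paper's proof: build a pairing specification with variables $Z_{X_i,Y_j}$ for game-equivalent pairs, whose equations collect the matched summands, and then apply $RDP$ and $RSP$ to identify $\langle X_1\mid E\rangle$ and $\langle Y_1\mid F\rangle$ with the solution of the combined specification. The summand-matching step you flag as the main obstacle is exactly what the paper's construction of $G_{XY}$ relies on (and glosses over), so your treatment is, if anything, slightly more explicit about the needed operational correspondence.
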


\begin{proof}
Firstly, by the elimination theorem of ACG with guarded recursion (see Theorem \ref{ETRecursion}), we know that each game term in ACG with linear recursion is equal to a game term $\langle X_1|E\rangle$ with $E$ a linear recursive specification.

It remains to prove that if $\langle X_1|E_1\rangle \sim \langle Y_1|E_2\rangle$ for linear recursive specification $E_1$ and $E_2$, then $\langle X_1|E_1\rangle = \langle Y_1|E_2\rangle$.

Let $E_1$ consist of recursive equations $X=G_X$ for $X\in \mathcal{X}$ and $E_2$
consists of recursion equations $Y=G_Y$ for $Y\in\mathcal{Y}$. Let the linear recursive specification $E$ consist of recursion equations $Z_{XY}=G_{XY}$, and $\langle X|E_1\rangle\sim\langle Y|E_2\rangle$, and $G_{XY}$ consists of the following summands:

\begin{enumerate}
  \item $G_{XY}$ contains a summand $(g_1\parallel\cdots\parallel g_m)Z_{X'Y'}$ iff $G_X$ contains the summand $(g_1\parallel\cdots\parallel g_m)X'$ and $G_Y$ contains the summand $(g_1\parallel\cdots\parallel g_m)Y'$ such that $\langle X'|E_1\rangle\sim\langle Y'|E_2\rangle$;
  \item $G_{XY}$ contains a summand $h_1\parallel\cdots\parallel h_n$ iff $G_X$ contains the summand $h_1\parallel\cdots\parallel h_n$ and $G_Y$ contains the summand $h_1\parallel\cdots\parallel h_n$.
\end{enumerate}

Let $\sigma$ map recursion variable $X$ in $E_1$ to $\langle X|E_1\rangle$, and let $\psi$ map recursion variable $Z_{XY}$ in $E$ to $\langle X|E_1\rangle$. So, $\sigma((g_1\parallel\cdots\parallel g_m)X')\equiv(g_1\parallel\cdots\parallel g_m)\langle X'|E_1\rangle\equiv\psi((g_1\parallel\cdots\parallel g_m)Z_{X'Y'})$, so by $RDP$, we get $\langle X|E_1\rangle=\sigma(G_X)=\psi(G_{XY})$. Then by $RSP$, $\langle X|E_1\rangle=\langle Z_{XY}|E\rangle$, particularly, $\langle X_1|E_1\rangle=\langle Z_{X_1Y_1}|E\rangle$. Similarly, we can obtain $\langle Y_1|E_2\rangle=\langle Z_{X_1Y_1}|E\rangle$. Finally, $\langle X_1|E_1\rangle=\langle Z_{X_1Y_1}|E\rangle=\langle Y_1|E_2\rangle$, as desired.
\end{proof}

\section{Abstraction}\label{abs}

In this section, we consider abstraction to abstract away inner games by use of idle game $\iota$.

\subsection{Guarded Linear Recursion}

The idle game $\iota$ as an atomic game, is introduced into $E$. Considering the recursive specification $X=\iota X$, $\iota G$, $\iota\iota G$, and $\iota\cdots G$ are all its solutions, that is, the solutions make the existence of $\iota$-loops which cause unfairness. To prevent $\iota$-loops, we extend the definition of linear recursive specification (Definition \ref{LRS}) to guarded one.

\begin{definition}[Guarded linear recursive specification]\label{GLRS}
A recursive specification is linear if its recursive equations are of the form

$$(g_{11}\parallel\cdots\parallel g_{1i_1})X_1\vee^i\cdots\vee^i(g_{k1}\parallel\cdots\parallel g_{ki_k})X_k\vee^i(h_{11}\parallel\cdots\parallel h_{1j_1})\vee^i\cdots\vee^i(h_{1j_1}\parallel\cdots\parallel h_{lj_l})$$

where $g_{11},\cdots,g_{1i_1},g_{k1},\cdots,g_{ki_k},h_{11},\cdots,h_{1j_1},h_{1j_1},\cdots,h_{lj_l}\in \mathcal{G}_{at}\cup\{\iota\}$.

A linear recursive specification $E$ is guarded if there does not exist an infinite sequence of $\iota$-transitions $\langle X|E\rangle\xrightarrow{\iota}\langle X'|E\rangle\xrightarrow{\iota}\langle X''|E\rangle\xrightarrow{\iota}\cdots$.
\end{definition}

\begin{theorem}[Conservitivity of ACG with idle game and guarded linear recursion]
ACG with idle game and guarded linear recursion is a conservative extension of ACG with linear recursion.
\end{theorem}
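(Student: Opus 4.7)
The plan is to invoke the Conservative Extension theorem of Section \ref{lts}, with $T_0$ the TSS of ACG with linear recursion and $T_0\oplus T_1$ the TSS of ACG with idle game and guarded linear recursion. The idle game $\iota$ is already an atomic game of BAG and its rule $\iota\rightarrow\surd$ already belongs to $T_0$; what is genuinely new in $T_1$ are the constants $\langle X_i\vert E\rangle$ for each guarded linear recursive specification $E$ whose right-hand sides mention $\iota$, together with the corresponding instances of the two transition rules of Table \ref{TRForGR}.

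First I would verify that $T_0$ is source-dependent. The rules of BAG and ACG are source-dependent by direct inspection of Tables \ref{TRForBAG} and \ref{TRForACG}, and the recursion rules of Table \ref{TRForGR} are source-dependent because every variable appearing in the premise is either inherited from the fresh constant $\langle X_i\vert E\rangle$ occurring in the source or is bound in the conclusion. This supplies hypothesis (1) of the Conservative Extension theorem, and the positive-after-reduction requirement holds trivially since no negative premises are present.

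Second I would check hypothesis (2) for each $\varrho\in T_1$. By construction every such $\varrho$ is an instance of Table \ref{TRForGR} for a specification $E$ that uses $\iota$; the source of $\varrho$ is then the constant $\langle X_i\vert E\rangle$, which is a function symbol in the extended signature but not in the signature of $T_0$, hence fresh. The first disjunct of hypothesis (2) is therefore satisfied for every rule, and the Conservative Extension theorem immediately delivers the claim.

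The main obstacle I anticipate is the careful handling of what counts as fresh. The symbol $\iota$ itself is not a new atomic game, and the axiom $\iota\rightarrow\surd$ is not a new rule; the novelty is concentrated entirely in the new family of constants $\langle X_i\vert E\rangle$ for guarded linear recursive specifications whose equations use $\iota$. The guardedness side condition of Definition \ref{GLRS} only restricts which such specifications are admitted and does not introduce further transition rules, so it plays no role in the conservativity argument itself. Once this bookkeeping is made explicit, the two hypotheses of the Conservative Extension theorem hold, and no further computation is required.
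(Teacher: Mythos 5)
Your overall strategy---invoking the conservative extension meta-theorem, verifying source-dependency of the base TSS, and checking that every rule of $T_1$ has a fresh source---is exactly the paper's. The problem is your bookkeeping of what is fresh. The baseline theory is ACG with linear recursion as set up in Section \ref{rec}, which explicitly excludes the idle game (``We do not consider the idle game $\iota$ in this section''); that exclusion is the whole reason the present theorem is phrased as adding the idle game. So your premise that $\iota$ is already an atomic game of $T_0$ and that the rule $\iota\rightarrow\surd$ already belongs to $T_0$ contradicts the setup: that rule lives in $T_1$, and the paper's proof consists precisely of observing that its source is the fresh constant $\iota$, so that hypothesis (2) of the conservative extension theorem holds for it. As written, your proposal declines to check these rules at all, which is a genuine (if small) gap.

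The repair is immediate: the rule for $\iota$ in Table \ref{TRForBAG} has the fresh constant $\iota$ as its source, so it satisfies the first disjunct of hypothesis (2), by the same argument you give for the constants $\langle X_i|E\rangle$ of guarded linear recursive specifications mentioning $\iota$ (Table \ref{TRForGR}, Definition \ref{GLRS}). Your treatment of those recursion constants is in fact a useful complement, since the paper's one-line proof mentions only the $\iota$ rules and leaves the new recursion constants implicit; with both families of new rules checked against freshness, and source-dependency of ACG with linear recursion noted as you do, the argument goes through.
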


\begin{proof}
Since the transition rules of ACG with linear recursion are source-dependent, and the transition rules for idle game in Table \ref{TRForBAG} contain only a fresh constant $\iota$ in their source, so the transition rules of ACG with idle game and guarded linear recursion is a conservative extension of those of ACG with linear recursion.
\end{proof}

\begin{theorem}[Congruence theorem of ACG with idle game and guarded linear recursion]
weak game equivalence $\approx$ is a congruence with respect to ACG with idle game and guarded linear recursion.
\end{theorem}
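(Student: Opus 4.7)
The plan is to reduce the congruence of $\approx$ to the already-established congruence of $\sim$ by exploiting the renaming operator $\iota_{I_G}$, and then to handle guarded linear recursion separately using the absence of $\iota$-loops. Concretely, suppose $G_1\approx H_1$ and $G_2\approx H_2$; I would first observe that $\iota_{I_{G}}$ distributes homomorphically over every operator of ACG in the sense that $\iota_{I_{f(x,y)}}(f(x,y))=f(\iota_{I_x}(x),\iota_{I_y}(y))$ for $f\in\{\vee,\wedge,\circ,\parallel\}$, and analogously $\iota_{I_{x^d}}(x^d)=(\iota_{I_x}(x))^d$. Given this, from $\iota_{I_{G_i}}(G_i)\sim \iota_{I_{H_i}}(H_i)$ for $i=1,2$ and the congruence of $\sim$ (which has already been proved for all ACG operators), one obtains $\iota_{I_{f(G_1,G_2)}}(f(G_1,G_2))\sim \iota_{I_{f(H_1,H_2)}}(f(H_1,H_2))$, which is exactly $f(G_1,G_2)\approx f(H_1,H_2)$.

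For each of the five operators I would then write out the corresponding short case. The cases for $^d$, $\vee$, $\wedge$ are immediate from the observation above and Definition \ref{dwge}. For $\circ$ and $\parallel$ I would verify in addition that the transition rules in Tables \ref{TRForBAG} and \ref{TRForACG} interact correctly with $\iota$-steps: a single $\iota$-move of one argument is matched by a zero-or-one $\iota$-move of the other, and this matching is preserved when the terms are combined, so that the outcome relation of the renamed composite is unchanged under $\approx$ of its arguments.

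The genuinely new case is guarded linear recursion. Here, given guarded linear specifications $E_1=\{X_i=G_i(\vec X)\}$ and $E_2=\{X_i=H_i(\vec X)\}$ with $G_i(\vec X)\approx H_i(\vec X)$ componentwise after the appropriate renaming, I would argue $\langle X_1\mid E_1\rangle\approx\langle X_1\mid E_2\rangle$ by exhibiting, after renaming, a weak game equivalence relation $\mathcal{R}$ that pairs $\langle X_i\mid E_1\rangle$ with $\langle X_i\mid E_2\rangle$ and is closed under the transition rules of Table \ref{TRForGR}. The guardedness condition of Definition \ref{GLRS} is used here to ensure that whenever one side performs an $\iota$-transition, the resulting derivative is again of the form $\langle X_j\mid E_k\rangle$ reachable by a finite $\iota$-chain, so the matching never needs to chase an infinite $\iota$-sequence; this is exactly where guardedness is indispensable.

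The main obstacle I anticipate is precisely this recursion case: one must be careful that after abstraction the $\iota$-steps introduced by the renaming do not, in combination with recursion unfolding, produce the forbidden infinite $\iota$-behaviour that guardedness is designed to rule out. A clean way to discharge this is to show first that if $E$ is a guarded linear recursive specification over $\mathcal{G}_{at}\cup\{\iota\}$, then for any set $I$ the system obtained by renaming all $g_a\in I$ to $\iota$ is still guarded in the sense of Definition \ref{GLRS}, because the renaming only shortens the original $\iota$-free prefixes and cannot create new infinite $\iota$-paths out of finite sequences of distinct atomic games. Once this invariant is in hand, the matching argument proceeds as in the concrete ACG cases above, and the combination with the operator cases yields the full congruence of $\approx$.
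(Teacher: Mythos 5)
Your proposal is essentially sound within the paper's framework, but it follows a genuinely different route from the paper's own argument. The paper's proof is a two-line appeal to facts: (i) the right-hand sides of guarded linear recursive equations can be adapted, by the ACG axioms and unfolding of recursion variables, to terms built from the ACG operations, and (ii) weak game equivalence $\approx$ is a congruence with respect to all ACG operations -- the latter being asserted rather than proved anywhere in the paper. You instead \emph{derive} the operator cases by reducing $\approx$-congruence to the already-proved $\sim$-congruence through the renaming operator, using the homomorphic identity $\iota_{I_{f(x,y)}}(f(x,y))=f(\iota_{I_x}(x),\iota_{I_y}(y))$ (which mirrors axioms $II3$--$II7$ of Table \ref{AxiomsForAbstraction}), and you treat recursion by an explicit matching argument together with the observation that renaming preserves guardedness in the sense of Definition \ref{GLRS}. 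What the paper's route buys is brevity and a clean factorization through fact (i), so that recursion needs no separate coinductive argument; what your route buys is that it actually discharges fact (ii) instead of assuming it, and it makes explicit where guardedness is used. Two cautions: your reduction hinges on the hidden sets $I_G$ being chosen coherently across subterms so that the homomorphic identity holds -- Definition \ref{dwge} is vague on this, so you should state it as an assumption or fix $I$ globally; and your remark about matching single $\iota$-moves for $\circ$ and $\parallel$ should not be read at the level of an LTS-style weak bisimulation (where choice operators notoriously break weak congruences); it is only safe because, as in your first paragraph, everything is phrased in terms of the outcome relations of the renamed terms, which is how the paper defines $\approx$.
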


\begin{proof}
It follows the following two facts:
\begin{enumerate}
  \item in a guarded linear recursive specification, right-hand sides of its recursive equations can be adapted to the form by applications of the axioms in ACG and replacing recursion variables by the right-hand sides of their recursive equations;
  \item weak game equivalence $\approx$ is a congruence with respect to all game operations of ACG.
\end{enumerate}
\end{proof}

The axioms and transition rules of the idle game $\iota$, please see Section \ref{bag} and Section \ref{acg}.

\subsection{Abstraction}

The unary abstraction operator $\iota_I$ ($I\subseteq \mathcal{G}_{at}$) renames all atomic games in $I$ into $\iota$. ACG with idle game and abstraction operator is called $ACG_{\iota}$. The transition rules of operator $\iota_I$ are shown in Table \ref{TRForAbstraction}.

\begin{center}
    \begin{table}
        $$\frac{x\xrightarrow{g_a}\surd}{\iota_I(x)\xrightarrow{g_a}\surd}\quad g_a\notin I
        \quad\quad\frac{x\xrightarrow{g_a}x'}{\iota_I(x)\xrightarrow{g_a}\iota_I(x')}\quad g_a\notin I$$

        $$\frac{x\xrightarrow{g_a}\surd}{\iota_I(x)\rightarrow\surd}\quad g_a\in I
        \quad\quad\frac{x\xrightarrow{g_a}x'}{\iota_I(x)\rightarrow\iota_I(x')}\quad g_a\in I$$
        \caption{Transition rule of the abstraction operator}
        \label{TRForAbstraction}
    \end{table}
\end{center}

\begin{theorem}[Conservitivity of $ACG_{\iota}$ with guarded linear recursion]
$ACG_{\iota}$ with guarded linear recursion is a conservative extension of ACG with idle game and guarded linear recursion.
\end{theorem}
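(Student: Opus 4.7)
The plan is to invoke the Conservative Extension Theorem of Section \ref{lts}. It suffices to verify that (i) the TSS $T_0$ of ACG with idle game and guarded linear recursion is source-dependent, and (ii) every transition rule $\varrho$ newly introduced for $\iota_I$ either has a fresh source or has a premise $G\xrightarrow{g_a}^d G'$ (resp.\ $GP$) whose source-variables occur in the source of $\varrho$ and whose label, right-hand side, or predicate is fresh.

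For clause (i), I would inherit source-dependency from the preceding conservativity results: the rules for atomic games, $\iota$, $^d$, $\vee^d$, $\circ$, and $\parallel$ are source-dependent by inspection, since every variable appearing in a premise also appears in the source. The rules of Table \ref{TRForGR} for guarded recursion introduce only the fresh constants $\langle X_i|E\rangle$ as sources, and the variable $y$ in their premise $G_i(\langle X_1|E\rangle,\ldots,\langle X_n|E\rangle)\xrightarrow{\{g_1,\ldots,g_n\}} y$ becomes source-dependent via the inductive closure clause of the definition, because $G_i$ itself is built from source-dependent constants. Hence $T_0$ is source-dependent, exactly as exploited in the earlier conservativity proofs for ACG and ACG with linear recursion.

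For clause (ii), I would inspect the four transition rules of Table \ref{TRForAbstraction}. Each has source $\iota_I(x)$, and $\iota_I$ is a function symbol that does not occur anywhere in the signature of ACG with idle game and guarded linear recursion, hence is fresh. Thus every new rule has a fresh source, clause (ii) is discharged trivially, and the alternative option involving a premise with a fresh label or predicate never needs to be invoked. Both $T_0$ and $T_0\oplus T_1$ are positive (no rule has a negative premise), so the positive-after-reduction hypothesis holds immediately. Combining these observations with the Conservative Extension Theorem delivers the claim.

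The only delicate point, and the main potential obstacle, is justifying source-dependency of the guarded recursion rules, since their premise is schematic in $G_i$ and introduces the variable $y$. The key is that the recursion constants $\langle X_i|E\rangle$ are genuinely constants rather than free variables, so no premise ever contains a free variable that is not reachable from the source via the closure clause; this is exactly the argument already used in the conservativity theorem for ACG with guarded recursion. Everything else in the verification is a routine inspection of the two tables.
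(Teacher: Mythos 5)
Your proposal is correct and takes essentially the same route as the paper's proof: it invokes the conservative extension theorem, citing source-dependency of the transition rules of ACG with idle game and guarded linear recursion and the fact that the rules of Table \ref{TRForAbstraction} have the fresh operator $\iota_I$ in their source. You are merely more explicit than the paper about checking source-dependency of the recursion rules and the positive-after-reduction hypothesis, which the paper leaves implicit.
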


\begin{proof}
Since the transition rules of ACG with idle game and guarded linear recursion are source-dependent, and the transition rules for abstraction operator in Table \ref{TRForAbstraction} contain only a fresh operator $\iota_I$ in their source, so the transition rules of $ACG_{\iota}$ with guarded linear recursion is a conservative extension of those of ACG with idle game and guarded linear recursion.
\end{proof}

\begin{theorem}[Congruence theorem of $ACG_{\iota}$ with guarded linear recursion]
Weak game equivalence $\approx$ is a congruence with respect to $ACG_{\iota}$ with guarded linear recursion.
\end{theorem}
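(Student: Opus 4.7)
The plan is to reduce this to a single new case, namely the abstraction operator $\iota_I$, since the previous congruence theorem already established that weak game equivalence $\approx$ is preserved by every other operation in $ACG_{\iota}$ with guarded linear recursion (parallel composition, sequential composition, the two choices, dualization, idle game, and the recursion binder $\langle X|E\rangle$). So the bulk of the argument amounts to showing that if $G_1 \approx G_2$ then $\iota_I(G_1) \approx \iota_I(G_2)$ for every $I \subseteq \mathcal{G}_{at}$.

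For that case I would unfold Definition \ref{dwge}: $\iota_I(G_1)\approx\iota_I(G_2)$ requires, for some $J_1,J_2\subseteq \mathcal{G}_{at}$, that the outcome relations $\rho^i_{\iota_{J_1}(\iota_I(G_1))}$ and $\rho^i_{\iota_{J_2}(\iota_I(G_2))}$ coincide. I would observe that composing the two renamings yields a single renaming: $\iota_{J_k}\circ\iota_I$ renames every atomic game in $I\cup J_k$ into $\iota$, and this is precisely the shape of a renaming admissible for checking $G_1 \approx G_2$. Thus, starting from witnesses $I_{G_1}, I_{G_2}$ for $G_1\approx G_2$, one can take $J_k := I_{G_k}\setminus I$ and conclude that the outcome relations of $\iota_{J_k}(\iota_I(G_k))$ coincide with those of $\iota_{I_{G_k}}(G_k)$, from which $\iota_I(G_1)\approx\iota_I(G_2)$ follows. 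The transition rules in Table \ref{TRForAbstraction} are the operational counterpart justifying this identification: games in $I$ are turned into $\iota$-steps, exactly matching the renaming $\iota_I$ at the level of outcome relations.

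The step I expect to carry most of the weight is verifying that the renamings do compose cleanly in the presence of guarded linear recursion, because the target of $\iota_I$ may be a recursive term $\langle X|E\rangle$ whose unfolding can traverse an unbounded number of transitions. Here I would appeal to the guardedness assumption in Definition \ref{GLRS}, which rules out infinite $\iota$-chains, so that every finite outcome observable from $\iota_I(\langle X|E\rangle)$ is reachable by a finite unfolding; the composed-renaming argument then applies level by level, exactly as in the congruence proof for ACG with guarded recursion. Together with the already-established congruence of $\approx$ for the other operators, this yields the claim.
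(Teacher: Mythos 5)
Your decomposition is the same as the paper's: the previous congruence theorem covers every operator of $ACG_{\iota}$ with guarded linear recursion except abstraction, so the theorem reduces to showing that $G_1\approx G_2$ implies $\iota_I(G_1)\approx\iota_I(G_2)$, which the paper then attributes directly to Definition \ref{dwge} and the transition rules in Table \ref{TRForAbstraction}. The problem lies in the one step where you try to supply the missing detail. With $J_k:=I_{G_k}\setminus I$, the composite $\iota_{J_k}\circ\iota_I$ renames every atomic game in $I\cup I_{G_k}$, so $\iota_{J_k}(\iota_I(G_k))$ does \emph{not} in general have the same outcome relation as $\iota_{I_{G_k}}(G_k)$: take $G_k=g_a$ with $g_a\in I$ and witness $I_{G_k}=\emptyset$; then $\iota_{J_k}(\iota_I(G_k))$ behaves as $\iota$, while $\iota_{I_{G_k}}(G_k)$ behaves as $g_a$. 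Hence the claimed coincidence, which is precisely the step ``from which $\iota_I(G_1)\approx\iota_I(G_2)$ follows,'' fails whenever $I\not\subseteq I_{G_k}$.

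What is true, and what a correct argument needs, is that $\iota_{J_k}(\iota_I(G_k))$ has the same outcome relation as $\iota_I(\iota_{I_{G_k}}(G_k))$, since both rename exactly the games in $I\cup I_{G_k}$. To pass from $\rho^i_{\iota_{I_{G_1}}(G_1)}=\rho^i_{\iota_{I_{G_2}}(G_2)}$ to $\rho^i_{\iota_I(\iota_{I_{G_1}}(G_1))}=\rho^i_{\iota_I(\iota_{I_{G_2}}(G_2))}$ you still need the substantive fact that applying one and the same abstraction $\iota_I$ to two terms with identical outcome relations yields terms with identical outcome relations, i.e., compatibility of $\iota_I$ with game equivalence; this is exactly the content the paper compresses into ``immediately from Definition \ref{dwge} and the transition rules in Table \ref{TRForAbstraction},'' and it is obtained operationally from those rules (transitions labelled $g_a\notin I$ are preserved, transitions labelled $g_a\in I$ become idle steps). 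Your appeal to guardedness (Definition \ref{GLRS}) does not carry this step: guardedness excludes infinite $\iota$-sequences but plays no role in the congruence property itself. With the composite-renaming identity corrected and the preservation lemma made explicit, your argument becomes a fleshed-out version of the paper's one-line proof.
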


\begin{proof}
Suppose that $G_1\approx G_2$, it is sufficient to prove that $\iota_I(G_1)\approx\iota_I(G_2)$. The proof can be immediately gotten from the definition of weak game equivalence (Definition \ref{dwge}) and the transition rules in Table \ref{TRForAbstraction}.
\end{proof}

We design the axioms for the abstraction operator $\iota_I$ in Table \ref{AxiomsForAbstraction}.

\begin{center}
\begin{table}
  \begin{tabular}{@{}ll@{}}
\hline No. &Axiom\\
  $II1$ & $g_a\notin I\quad \iota_I(g_a)=g_a$\\
  $II2$ & $g_a\in I\quad \iota_I(g_a)=\iota$\\
  $II3$ & $\iota_I(x\wedge y)=\iota_I(x)\wedge\iota_I(y)$\\
  $II4$ & $\iota_I(x\vee y)=\iota_I(x)\vee\iota_I(y)$\\
  $II5$ & $\iota_I(x\circ y)=\iota_I(x)\circ\iota_I(y)$\\
  $II6$ & $\iota_I(x\parallel y)=\iota_I(x)\parallel\iota_I(y)$\\
  $II7$ & $\iota_I(x^d)=(\iota_I(x))^d$\\
\end{tabular}
\caption{Axioms of abstraction operator}
\label{AxiomsForAbstraction}
\end{table}
\end{center}

\begin{theorem}[Soundness of $ACG_{\iota}$ with guarded linear recursion]\label{SACGABS}
Let $x$ and $y$ be $ACG_{\iota}$ with guarded linear recursion terms. If $ACG_{\iota}$ with guarded linear recursion $\vdash x=y$, then $x\approx y$.
\end{theorem}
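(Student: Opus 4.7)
The plan is to follow the template used for the earlier soundness theorems, adapted to weak game equivalence $\approx$. By the preceding congruence theorem, $\approx$ is both an equivalence and a congruence with respect to $ACG_{\iota}$ with guarded linear recursion, so it suffices to check that every axiom of the combined theory is sound modulo $\approx$. The axioms to verify come from four sources: the BAG/ACG axioms (Tables \ref{AxiomsForBAG} and \ref{AxiomsForACG}, including those governing $\iota$), the principles $RDP$ and $RSP$ (Table \ref{RDPRSP}), and the abstraction axioms $II1$--$II7$ (Table \ref{AxiomsForAbstraction}).

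First I would dispatch the BAG/ACG axioms by appealing to the earlier soundness results modulo $\sim$: since $G_1\sim G_2$ implies $G_1\approx G_2$ directly from Definition \ref{dwge} (taking $I=\emptyset$ leaves the outcome relations unchanged), every axiom already validated modulo $\sim$ remains valid modulo $\approx$. Next, the abstraction axioms $II1$--$II7$ are each homomorphism-style laws stating that $\iota_I$ either commutes with a game constructor or renames an atomic game according to its membership in $I$, and each can be read off directly from the transition rules in Table \ref{TRForAbstraction} combined with Definition \ref{dwge}. Axiom $RDP$ follows straightforwardly from the transition rules for guarded recursion in Table \ref{TRForGR}: the two premises of those rules exhibit $\langle X_i|E\rangle$ and $G_i(\langle X_1|E\rangle,\ldots,\langle X_n|E\rangle)$ as producing identical transitions.

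The main obstacle is $RSP$ in the presence of the idle game. Under strong equivalence the standard proof exploits the fact that a guarded recursive specification has a unique solution up to $\sim$; under $\approx$ this uniqueness is more delicate and crucially relies on the extra clause in Definition \ref{GLRS} forbidding infinite $\iota$-chains $\langle X|E\rangle\xrightarrow{\iota}\langle X'|E\rangle\xrightarrow{\iota}\cdots$. I would prove uniqueness by showing that if $\vec{y}$ and $\vec{y}'$ are two solutions of a guarded linear recursive specification $E$ modulo $\approx$, then $y_i\approx y_i'$ componentwise: each non-$\iota$ transition from $y_i$ is matched by a corresponding transition from $y_i'$ after absorbing leading $\iota$-steps, and guardedness ensures that this $\iota$-matching process terminates, so the induced transition-matching is a witness of weak game equivalence.

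Once uniqueness is in place, $RSP$ follows almost immediately: if $\vec{y}$ satisfies the equations of $E$, then both $\vec{y}$ and $\langle\vec{X}|E\rangle$ are solutions of the same guarded linear recursive specification, so $y_i\approx\langle X_i|E\rangle$ for each $i$, which is exactly the assertion of $RSP$. Combined with the routine checks for the other axiom groups, this delivers soundness of the full axiom system modulo $\approx$.
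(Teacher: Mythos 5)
Your proposal is correct and follows the same overall template as the paper's proof (reduce soundness to an axiom-by-axiom check, using that $\approx$ is an equivalence and a congruence), but it is noticeably more thorough in its decomposition. The paper's own argument consists of exactly that reduction and then restricts the check to the abstraction axioms of Table \ref{AxiomsForAbstraction}, leaving even those verifications as an exercise; it does not address the fact that the inherited BAG/ACG axioms and the principles of Table \ref{RDPRSP} were previously validated only modulo strong game equivalence $\sim$, not modulo $\approx$. You close precisely these gaps: the observation that $\sim$ refines $\approx$ (Definition \ref{dwge}) transfers the earlier soundness results, $RDP$ is read off from Table \ref{TRForGR}, and -- most substantively -- you isolate $RSP$ as the one delicate case and sketch a uniqueness-of-solutions argument in which the guardedness condition of Definition \ref{GLRS} (no infinite $\iota$-chains) is what makes the $\iota$-absorbing matching terminate. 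That is exactly the point where soundness modulo weak equivalence could genuinely fail for unguarded specifications, and it is the point the paper glosses over entirely; your route buys a proof that actually justifies the theorem as stated, at the cost of needing the (standard, but nontrivial) unique-solution lemma, whereas the paper's version buys brevity by delegating all content to the reader.
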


\begin{proof}
Since $\approx$ is both an equivalent and a congruent relation with respect to $ACG_{\iota}$ with guarded linear recursion, we only need to check if each axiom in Table \ref{AxiomsForAbstraction} is sound modulo $\approx$. The proof is left to the readers as an exercise.
\end{proof}

Though $\iota$-loops are prohibited in guarded linear recursive specifications (see Definition \ref{GLRS}) in a specifiable way, they can be constructed using the abstraction operator, for example, there exist $\iota$-loops in the game term $\iota_{\{g_a\}}(\langle X|X=g_aX\rangle)$. To avoid $\iota$-loops caused by $\iota_I$ and ensure fairness, the concept of cluster and $CFAR$ (Cluster Fair Abstraction Rule) \cite{CFAR} are still valid in games, we introduce them below.

\begin{definition}[Cluster]\label{CLUSTER}
Let $E$ be a guarded linear recursive specification, and $I\subseteq \mathcal{G}_{at}$. Two recursion variable $X$ and $Y$ in $E$ are in the same cluster for $I$ iff there exist sequences of transitions $\langle X|E\rangle\xrightarrow{\{h_{11},\cdots, h_{1i}\}}\cdots\xrightarrow{\{h_{m1},\cdots, h_{mi}\}}\langle Y|E\rangle$ and $\langle Y|E\rangle\xrightarrow{\{c_{11},\cdots, c_{1j}\}}\cdots\xrightarrow{\{c_{n1},\cdots, c_{nj}\}}\langle X|E\rangle$, where $h_{11},\cdots,h_{mi},c_{11},\cdots,c_{nj}\in I\cup\{\iota\}$.

$g_1\parallel\cdots\parallel g_k$ or $(g_1\parallel\cdots\parallel g_k) X$ is an exit for the cluster $C$ iff: (1) $g_1\parallel\cdots\parallel g_k$ or $(g_1\parallel\cdots\parallel g_k) X$ is a summand at the right-hand side of the recursive equation for a recursion variable in $C$, and (2) in the case of $(g_1\parallel\cdots\parallel g_k) X$, either $g_l\notin I\cup\{\iota\}(l\in\{1,2,\cdots,k\})$ or $X\notin C$.
\end{definition}

\begin{center}
\begin{table}
  \begin{tabular}{@{}ll@{}}
  \hline
  No. &Axiom\\
  $CFAR$ & If $X$ is in a cluster for $I$ with exits \\
           & $\{(g_{11}\parallel\cdots\parallel g_{1i})Y_1,\cdots,(g_{m1}\parallel\cdots\parallel g_{mi})Y_m, h_{11}\parallel\cdots\parallel h_{1j},\cdots,h_{n1}\parallel\cdots\parallel h_{nj}\}$, \\
           & then $\iota\circ\iota_I(\langle X|E\rangle)=$\\
           & $\iota\circ\iota_I((g_{11}\parallel\cdots\parallel g_{1i})\langle Y_1|E\rangle\vee^i\cdots\vee^i(g_{m1}\parallel\cdots\parallel g_{mi})\langle Y_m|E\rangle\vee^i h_{11}\parallel\cdots\parallel h_{1j}\vee^i\cdots\vee^i h_{n1}\parallel\cdots\parallel h_{nj})$\\
\end{tabular}
\caption{Cluster fair abstraction rule}
\label{CFAR}
\end{table}
\end{center}

\begin{theorem}[Soundness of $CFAR$]\label{SCFAR}
$CFAR$ is sound modulo weak game equivalence $\approx$.
\end{theorem}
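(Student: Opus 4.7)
The plan is to establish $\iota\circ\iota_I(\langle X|E\rangle)\approx \iota\circ\iota_I(R)$, where $R$ abbreviates the exit summation on the right-hand side of the rule, by comparing the induced outcome relations $\rho^i$ on both sides as required by Definition~\ref{dwge}. Because $\approx$ is known to be a congruence with respect to $ACG_{\iota}$ with guarded linear recursion, and $CFAR$ is an equation between closed terms, it suffices to reason directly on the transition graphs of the two sides.

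First, I would characterise the outgoing transitions of $\iota_I(\langle X|E\rangle)$ using the guarded linear form of the equation for $X$ together with Tables~\ref{TRForGR} and~\ref{TRForAbstraction}. By Definition~\ref{CLUSTER}, every transition from $\langle X|E\rangle$ is either (i) an intra-cluster move labelled by a game in $I\cup\{\iota\}$, which after $\iota_I$ becomes an $\iota$-step to $\iota_I(\langle X'|E\rangle)$ for some variable $X'$ in the same cluster, or (ii) one of the listed exits. Then, using axioms $II3$--$II6$, I would unfold the right-hand side so that $\iota_I$ distributes over $\vee^i$, $\parallel$, and $\circ$, exhibiting that $\iota_I(R)$ offers exactly the abstracted exit transitions: either an exit $g_{k1}\parallel\cdots\parallel g_{ki}$ leading to $\iota_I(\langle Y_k|E\rangle)$, or a terminating exit $h_{l1}\parallel\cdots\parallel h_{lj}$.

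Next I would match the two outcome relations. Every computation of $\iota\circ\iota_I(\langle X|E\rangle)$ begins with the leading $\iota$, continues with some finite sequence of intra-cluster $\iota$-steps, and then fires some exit. Each such exiting run is matched verbatim by a run of $\iota\circ\iota_I(R)$: the leading $\iota$ on the right absorbs the intra-cluster wandering of the left, which is precisely the role the $\iota$-prefix plays on both sides and why that prefix is mandatory for $CFAR$ to hold. Conversely, every run of the right-hand side corresponds to a run of the left-hand side in which the wandering phase is empty. From this correspondence I would conclude $\rho^1$ and $\rho^2$ coincide on the two terms, yielding $\approx$.

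The main obstacle will be the fairness content of $CFAR$: on the left-hand side there exist infinite $\iota$-trajectories that remain in the cluster forever and never fire any exit, whereas every trajectory of the right-hand side exits after one $\iota$-step. Soundness therefore rests on the fairness assumption implicit in weak game equivalence, namely that such non-exiting infinite internal runs are discounted in the outcome relations $\rho^i$. I would justify this by appealing to the fairness interpretation built into the game-board semantics, in direct analogy with the classical $CFAR$ justification in process algebra~\cite{CFAR}, and argue that under this interpretation the outcome sets on the two sides necessarily agree.
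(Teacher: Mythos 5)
Your proposal is correct and follows essentially the same route as the paper's own proof: characterise the plays of $\langle X|E\rangle$ as intra-cluster moves from $I\cup\{\iota\}$ followed by an exit, observe that under $\iota_I$ the intra-cluster part becomes a string of $\iota$'s which, being non-initial after the leading $\iota\circ$ prefix, is truly idle, and match the exits of the two sides to conclude $\approx$. Your explicit treatment of the fairness issue (discounting infinite non-exiting $\iota$-runs) is in fact spelled out more carefully than the paper's terse ``these $\iota^*$ are non-initial, so they are truly idle'' remark, but it is the same underlying argument.
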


\begin{proof}
Let $X$ be in a cluster for $I$ with exits $\{(g_{11}\parallel\cdots\parallel g_{1i})Y_1,\cdots,(g_{m1}\parallel\cdots\parallel g_{mi})Y_m,h_{11}\parallel\cdots\parallel h_{1j},\cdots,h_{n1}\parallel\cdots\parallel h_{nj}\}$. Then $\langle X|E\rangle$ can play a string of atomic games from $I\cup\{\iota\}$ inside the cluster of $X$, followed by an exit $(g_{i'1}\parallel\cdots\parallel g_{i'i})Y_{i'}$ for $i'\in\{1,\cdots,m\}$ or $h_{j'1}\parallel\cdots\parallel h_{j'j}$ for $j'\in\{1,\cdots,n\}$. Hence, $\iota_I(\langle X|E\rangle)$ can play a string of $\iota^*$ inside the cluster of $X$, followed by an exit $\iota_I((g_{i'1}\parallel\cdots\parallel g_{i'i})\langle Y_{i'}|E\rangle)$ for $i'\in\{1,\cdots,m\}$ or $\iota_I(h_{j'1}\parallel\cdots\parallel h_{j'j})$ for $j'\in\{1,\cdots,n\}$. And these $\iota^*$ are non-initial in $\iota\iota_I(\langle X|E\rangle)$, so they are truly idle, we obtain $\iota\iota_I(\langle X|E\rangle)\approx\iota\circ\iota_I((g_{11}\parallel\cdots\parallel g_{1i})\langle Y_1|E\rangle\vee^i\cdots\vee^i(g_{m1}\parallel\cdots\parallel g_{mi})\langle Y_m|E\rangle\vee^i h_{11}\parallel\cdots\parallel h_{1j}\vee^i\cdots\vee^i b_{n1}\parallel\cdots\parallel h_{nj})$, as desired.
\end{proof}

\begin{theorem}[Completeness of $ACG_{\iota}$ with guarded linear recursion and $CFAR$]\label{CCFAR}
Let $G$ and $H$ be closed $ACG_{\iota}$ with guarded linear recursion and $CFAR$ terms, then, if $G\approx H$ then $G=H$.
\end{theorem}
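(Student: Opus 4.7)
The plan is to follow the pattern established in Theorem \ref{CACGR} (Completeness of ACG with linear recursion), adapting it to weak game equivalence $\approx$ and handling both the abstraction operator and the $\iota$-loops it can create. The argument will layer three reductions: first eliminate $\iota_I$, second eliminate $\iota$-clusters via $CFAR$, and third reuse the joint-specification construction of Theorem \ref{CACGR}.

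First, I would eliminate the abstraction operator. Using axioms $II1$--$II7$ of Table \ref{AxiomsForAbstraction}, each occurrence of $\iota_I$ can be pushed inward through $\wedge$, $\vee$, $\circ$, $\parallel$ and $^d$ until it reaches atomic games, where $II1$ and $II2$ replace the atomic game by itself or by $\iota$. Combined with the elimination theorem of ACG with linear recursion (Theorem \ref{ETRecursion}) applied inside each $\iota_I$ before pushing, every closed $ACG_\iota$ term with guarded linear recursion is provably equal to some $\langle X_1|E\rangle$ where $E$ is a guarded linear recursive specification over $\mathcal{G}_{at}\cup\{\iota\}$. Thus it suffices to prove: if $\langle X_1|E_1\rangle\approx\langle Y_1|E_2\rangle$ for guarded linear recursive specifications $E_1,E_2$, then $\langle X_1|E_1\rangle=\langle Y_1|E_2\rangle$ is derivable.

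Second, I would use $CFAR$ to contract $\iota$-clusters. For each recursion variable $X$ in $E_1$, the cluster of $X$ for $I=\emptyset$ (so that only $\iota$-transitions stay inside) is finite, and by Definition \ref{GLRS} such a cluster has no infinite $\iota$-sequence; its exits, as given in Definition \ref{CLUSTER}, are summands of the form $(g_1\parallel\cdots\parallel g_k)Y$ with the first atomic game outside $\{\iota\}$ or $Y$ outside the cluster, together with terminating summands. Using $CFAR$ on each cluster, together with $RSP$, $\langle X|E_1\rangle$ is provably equal to an $\iota$-prefix of a choice of exits, and by repeatedly performing this contraction I obtain an equivalent specification $E_1'$ in which no recursion variable sits inside a nontrivial $\iota$-cluster. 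Do the same for $E_2$, producing $E_2'$.

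Third, with both sides free of $\iota$-loops, I would carry out the joint-specification construction exactly as in Theorem \ref{CACGR}: introduce a variable $Z_{XY}$ for every pair with $\langle X|E_1'\rangle\approx\langle Y|E_2'\rangle$, and let $G_{XY}$ contain a summand $(g_1\parallel\cdots\parallel g_m)Z_{X'Y'}$ whenever the corresponding matched summands appear on both sides, and similarly for terminating summands. The soundness parts ($II$-axioms, soundness of abstraction and of $CFAR$ from Theorems \ref{SACGABS} and \ref{SCFAR}, together with the congruence theorems) supply all the required identities, and $RDP$/$RSP$ then give $\langle X_1|E_1'\rangle=\langle Z_{X_1Y_1}|E\rangle=\langle Y_1|E_2'\rangle$, hence $\langle X_1|E_1\rangle=\langle Y_1|E_2\rangle$ as desired.

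The hard part will be the second step: justifying that after pushing abstraction inward and applying $CFAR$, enough structure remains so that the summand-matching induction of Theorem \ref{CACGR} actually goes through. The delicate point is that weak equivalence allows the two sides to disagree on the number of $\iota$-steps taken before an observable atomic move, so the match between a summand of $G_X$ and a summand of $G_Y$ is no longer one step on each side but must be mediated by the contracted cluster representation. Once every cluster has been replaced by a single $\iota$-prefix whose body is a choice of exits, each observable transition on one side corresponds to exactly one exit on the other, and the induction on the sizes of the two normal forms closes in the same manner as for the strong case.
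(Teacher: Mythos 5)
There is a genuine gap in your first step, and it propagates into the second. You claim that by pushing $\iota_I$ inward with $II1$--$II7$ (after applying Theorem \ref{ETRecursion} inside the abstraction) every closed term becomes provably equal to $\langle X_1|E\rangle$ with $E$ a \emph{guarded} linear recursive specification over $\mathcal{G}_{at}\cup\{\iota\}$. This does not work: the axioms of Table \ref{AxiomsForAbstraction} distribute $\iota_I$ only over the five game operations, not over a recursion constant $\langle X|E\rangle$, so once the inside has been brought into the form $\langle X|E\rangle$ the pushing stops; unfolding with $RDP$ just reproduces $\iota_I$ applied to recursion constants and never terminates. Worse, even if one writes down the ``abstracted'' specification obtained by renaming the games of $I$ to $\iota$ in the equations of $E$, that specification is in general \emph{unguarded} -- the paper's own example $\iota_{\{g_a\}}(\langle X|X=g_aX\rangle)$ produces exactly such an $\iota$-loop -- so it is not a guarded linear recursive specification in the sense of Definition \ref{GLRS}, and $RSP$ cannot be invoked to identify $\iota_I(\langle X|E\rangle)$ with its solution (for unguarded specifications the solution is not unique modulo $\approx$). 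Your second step then tries to repair this by applying $CFAR$ with $I=\emptyset$ to the already-abstracted specification, but $CFAR$ (Table \ref{CFAR}, Definition \ref{CLUSTER}) is formulated for $\iota_I(\langle X|E\rangle)$ with $E$ a \emph{guarded} linear recursive specification; it is not a rule for removing $\iota$-loops from an unguarded specification, so the tool is being used outside its stated scope.

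The order of operations has to be reversed, which is what the paper's proof does: keep the original guarded specification $E$, partition its recursion variables into clusters $C_1,\cdots,C_N$ \emph{for $I$}, and use $CFAR$ (sound by Theorem \ref{SCFAR}) to equate $(g_1\parallel\cdots\parallel g_j)\iota_I(\langle Z|E\rangle)$ with the abstracted choice of exits of the cluster of $Z$. One then defines a fresh linear specification $F$ whose equations are exactly these abstracted exit compositions; $F$ is guarded precisely because every exit either starts with a game outside $I\cup\{\iota\}$ or leaves the cluster, and $RDP$/$RSP$ yield $\iota_I(\langle X|E\rangle)=\langle X|F\rangle$. This reduces the new case $G\equiv\iota_I(H)$ to completeness for guarded linear recursion without ever forming an unguarded specification, after which the joint-specification matching (your third step, which is in the same spirit as the paper and as Theorem \ref{CACGR}, now carried out modulo $\approx$) finishes the argument. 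Your closing remark about $\iota$-step mismatches under weak equivalence is a real issue, but it is one the paper itself leaves essentially implicit; the decisive defect in your proposal is the claimed elimination of $\iota_I$ into a guarded specification before $CFAR$ is ever applied.
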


\begin{proof}
Firstly, we know that each process term $G$ in ACG with idle game and guarded linear recursion is equal to a process term $\langle X_1|E\rangle$ with $E$ a guarded linear recursive specification. And we prove if $\langle X_1|E_1\rangle\approx\langle Y_1|E_2\rangle$, then $\langle X_1|E_1\rangle=\langle Y_1|E_2\rangle$

The only new case is $G\equiv\iota_I(H)$. Let $H=\langle X|E\rangle$ with $E$ a guarded linear recursive specification, so $G=\iota_I(\langle X|E\rangle)$. Then the collection of recursive variables in $E$ can be divided into its clusters $C_1,\cdots,C_N$ for $I$. Let

$$(g_{1i1}\parallel\cdots\parallel g_{k_{i1}i1}) Y_{i1}\vee^i\cdots\vee^i(g_{1im_i}\parallel\cdots\parallel g_{k_{im_i}im_i}) Y_{im_i}\vee^i h_{1i1}\parallel\cdots\parallel h_{l_{i1}i1}\vee^i\cdots\vee^i h_{1im_i}\parallel\cdots\parallel h_{l_{im_i}im_i}$$

be the conflict composition of exits for the cluster $C_i$, with $i\in\{1,\cdots,N\}$.

For $Z\in C_i$ with $i\in\{1,\cdots,N\}$, we define

$$G_Z\triangleq (\hat{g_{1i1}}\parallel\cdots\parallel \hat{g_{k_{i1}i1}}) \iota_I(\langle Y_{i1}|E\rangle)\vee^i\cdots\vee^i(\hat{g_{1im_i}}\parallel\cdots\parallel \hat{g_{k_{im_i}im_i}}) \iota_I(\langle Y_{im_i}|E\rangle)\vee^i\hat{h_{1i1}}\parallel\cdots\parallel \hat{h_{l_{i1}i1}}\vee^i\cdots\vee^i\hat{h_{1im_i}}\parallel\cdots\parallel \hat{h_{l_{im_i}im_i}}$$

For $Z\in C_i$ and $g_1,\cdots,g_j\in \mathcal{G}_{at}\cup\{\iota\}$ with $j\in\mathbb{N}$, we have

$(g_1\parallel\cdots\parallel g_j)\iota_I(\langle Z|E\rangle)$

$=(g_1\parallel\cdots\parallel g_j)\iota_I((g_{1i1}\parallel\cdots\parallel g_{k_{i1}i1}) \langle Y_{i1}|E\rangle\vee^i\cdots\vee^i(g_{1im_i}\parallel\cdots\parallel g_{k_{im_i}im_i}) \langle Y_{im_i}|E\rangle\vee^i h_{1i1}\parallel\cdots\parallel h_{l_{i1}i1}\vee^i\cdots\vee^i h_{1im_i}\parallel\cdots\parallel h_{l_{im_i}im_i})$

$=(g_1\parallel\cdots\parallel g_j)s_Z$

Let the linear recursive specification $F$ contain the same recursive variables as $E$, for $Z\in C_i$, $F$ contains the following recursive equation

$$Z=(\hat{g_{1i1}}\parallel\cdots\parallel \hat{g_{k_{i1}i1}}) Y_{i1}\vee^i\cdots\vee^i(\hat{g_{1im_i}}\parallel\cdots\parallel \hat{g_{k_{im_i}im_i}})  Y_{im_i}\vee^i\hat{h_{1i1}}\parallel\cdots\parallel \hat{h_{l_{i1}i1}}\vee^i\cdots\vee^i\hat{h_{1im_i}}\parallel\cdots\parallel \hat{h_{l_{im_i}im_i}}$$

It is easy to see that there is no sequence of one or more $\iota$-transitions from $\langle Z|F\rangle$ to itself, so $F$ is guarded.

For

$$G_Z=(\hat{g_{1i1}}\parallel\cdots\parallel \hat{g_{k_{i1}i1}}) Y_{i1}\vee^i\cdots\vee^i(\hat{g_{1im_i}}\parallel\cdots\parallel \hat{g_{k_{im_i}im_i}}) Y_{im_i}\vee^i\hat{h_{1i1}}\parallel\cdots\parallel \hat{h_{l_{i1}i1}}\vee^i\cdots\vee^i\hat{h_{1im_i}}\parallel\cdots\parallel \hat{h_{l_{im_i}im_i}}$$

is a solution for $F$. So, $(g_1\parallel\cdots\parallel g_j)\iota_I(\langle Z|E\rangle)=(g_1\parallel\cdots\parallel g_j)s_Z=(g_1\parallel\cdots\parallel g_j)\langle Z|F\rangle$.

So,

$$\langle Z|F\rangle=(\hat{g_{1i1}}\parallel\cdots\parallel \hat{g_{k_{i1}i1}}) \langle Y_{i1}|F\rangle\vee^i\cdots\vee^i(\hat{g_{1im_i}}\parallel\cdots\parallel \hat{g_{k_{im_i}im_i}}) \langle Y_{im_i}|F\rangle\vee^i\hat{h_{1i1}}\parallel\cdots\parallel \hat{h_{l_{i1}i1}}\vee^i\cdots\vee^i\hat{h_{1im_i}}\parallel\cdots\parallel \hat{h_{l_{im_i}im_i}}$$

Hence, $\iota_I(\langle X|E\rangle=\langle Z|F\rangle)$, as desired.
\end{proof}

\section{Conclusions}\label{con}

We introduce operational semantics into games in this paper. And based on the operational semantics, we extend the basic algebra of games \cite{BAG1} \cite{BAG2} and algebra of concurrent games \cite{CG4} with recursion and abstraction, and establish a fully algebraic axiomatization for games.

The future work will include two aspects: one is to beyond two-person game to establish algebraic theories of more kind of games, such as imperfect games; the other is that the algebras can be used to reason on the behaviors of systems (not only computational systems) with game theory supported.

\newpage

%\appendix
%\section{appendix 1}
%
%appendix 1

\label{lastpage}

\end{document}